\author{Federico~Poloni\footnote{Dipartimento di Informatica, Universit\`a di Pisa. Largo Pontecorvo 3, 56127 Pisa, Italy. E-mail \texttt{fpoloni@di.unipi.it}} {} and Giacomo~Sbrana\footnote{Corresponding author Tel.: +33 232824673.}\footnote{Neoma Business School. 1, rue du Mar{\'e}chal Juin, 76130 Mont-Saint-Aignan, France. E-mail \texttt{gsb@neoma-bs.fr}}}
\title{A note on forecasting demand using the multivariate exponential smoothing framework}
\newtheorem{Theorem}{Theorem}
\newtheorem{Remark}[Theorem]{Remark}
\newtheorem{Lemma}[Theorem]{Lemma}
\newtheorem{Proposition}[Theorem]{Proposition}
\newtheorem{Corollary}[Theorem]{Corollary}
\newcommand{\E}[1]{\mathbb{E}\left[ #1 \right]}
\DeclarePairedDelimiter{\abs}{\lvert}{\rvert}
\DeclarePairedDelimiter{\norm}{\lVert}{\rVert}
\begin{document}
\baselineskip=14pt
\date{}
\newcommand{\m}[1]{\begin{bmatrix}#1 \end{bmatrix}}

\maketitle
\begin{abstract}
Simple exponential smoothing is widely used in forecasting economic time series. This is because it is quick to compute and it generally delivers accurate forecasts. On the other hand, its multivariate version has received little attention due to the complications arising with the estimation. Indeed, standard multivariate maximum likelihood methods are affected by numerical convergence issues and bad complexity, growing with the dimensionality of the model. In this paper, we introduce a new estimation strategy for multivariate exponential smoothing, based on aggregating its observations into scalar models and estimating them. The original high-dimensional maximum likelihood problem is broken down into several univariate ones, which are easier to solve. Contrary to the multivariate maximum likelihood approach, the suggested algorithm does not suffer heavily from the dimensionality of the model. The method can be used for time series forecasting. In addition, simulation results show that our approach performs at least as well as a maximum likelihood estimator on the underlying VMA(1) representation, at least in our test problems. 

\paragraph{Keywords:} Multivariate Exponential Smoothing, EWMA, Forecasting demand, %Signal extraction, VARMA, Common trends.
\end{abstract}

\section{Introduction}
Simple exponential smoothing represents an important benchmark model when forecasting the demand for goods and services. The most attractive feature of this model is its ease of computation. Unfortunately, the same feature does not hold for its multivariate version due to the complications arising with the estimation. Indeed, this represents an obstacle for practitioners, discouraging the employment of this model in empirical analysis.  This paper addresses this relevant issue by providing a feasible and accurate estimation method for a multivariate exponential smoothing model. 

We focus on the following state-space representation of an unrestricted multivariate simple exponential smoothing model (See \cite[Chapter 8]{HARV2})
%\boldsymbol{
\begin{equation}
\begin{array}{l}
y_{t}=\mu_{t}+\epsilon_{t},\\

\mu_{t}=\mu_{t-1}+\eta_{t},
\end{array}
\label{eq:MRW}
\end{equation}
with $y_t,\mu_t,\epsilon_t,\eta_t\in\mathbb{R}^{N}$. The noises $\eta_t$ and $\epsilon_t$ characterizing the system are independent and identically distributed with expected value equal to zero and
\begin{equation}
\operatorname{cov} \m{
\epsilon_{t}\\
\eta_{t}\\
}= 
\m{\Sigma_{\epsilon} &  0 \\
0 & \Sigma_{\eta}
}
\label{eq:Matrix1}
\end{equation}
where $\Sigma_{\epsilon}>0$, $\Sigma_{\eta}>0$ and $0$ are $N\times N$ matrices. The nonstationary system \eqref{eq:MRW} is known as the \emph{structural process}, and the covariances as in (\ref{eq:Matrix1}) are called \emph{structural parameters}. The system can be reparametrized as a first order integrated vector moving average process (i.e. integrated VMA(1)), the so-called \emph{reduced form}, using the Wold representation theorem
\begin{equation}
z_t := y_{t}-y_{t-1}=\eta_{t}+\epsilon_{t}-\epsilon_{t-1}=u_{t}-\Theta u_{t-1}, \quad \E{u_{t}u_{t}^T}=\Sigma_{u},
\label{eq:VMA_SS}
\end{equation}
for a suitable $\Theta, \Sigma>0\in\mathbb{R}^{N\times N}$, and an innovation process $u_t$ which is uncorrelated, but not in general independent. The parameters can be chosen so that \eqref{eq:VMA_SS} is \emph{invertible}, i.e., all the eigenvalues of $\Theta$ have modulus smaller than 1. This version can be recast in the more familiar exponentially weighted moving average form (EWMA)
\begin{equation}
\hat{y}_{t}=(I-\Theta)y_{t-1}+\Theta \hat{y}_{t-1},\;\;\;\;\;\;\;\;\text{for $t=1,2,\dots, T$},
\label{eq:MEWMA}
\end{equation}
where $\hat{y}_{t}$ denotes the forecast of $y_{t}$.

Our method is based on the relation between the original model and several scalar aggregates of the form $x_t=w_t^Tz_t$, for suitable $w\in\mathbb{R}^N$. It is a consequence of Wold's decomposition theorem \cite[Section~2.1.3]{Lut} that each of these models is a MA(1), i.e.,
\[
 x_t = v_t - \psi v_{t-1}, \quad \E{v_t^2}=\sigma.
\]
(Notice that we use here $\sigma$ to denote a variance, rather than a standard deviation, for notational consistency with $\Sigma$ for multivariate processes.)

Visually, we can represent the estimation procedure using Figure 1: \\
\begin{center}
%[FIGURE 1 ABOUT HERE]\\
\begin{tikzpicture}
\matrix (m) [matrix of nodes, column sep=2.5cm,row  sep=1cm, align=center, nodes={rectangle,draw, text width = 3cm,anchor=center} ]{
Observed data $z_t=y_t-y_{t-1}$               & parameters $\Theta,\Sigma_u$ of the MA    & autocovariances $\Gamma_0,\Gamma_1$ of the model\\
aggregates $w^Tz_t$      & parameters $\psi,\sigma$ of aggregate models             & autocovariances $\gamma_0,\gamma_1$ of aggregate models\\
};
\draw[-triangle 60] (m-1-1) edge node[right]{aggregate} (m-2-1);
\draw[-triangle 60] (m-2-1) edge node[below]{estimate (ML)} (m-2-2);
\draw[-triangle 60] (m-2-2) edge node[below]{compute} (m-2-3);
\draw[-triangle 60] (m-2-3) edge node[left]{compute} (m-1-3);
\draw[-triangle 60] (m-1-3) edge node[above]{compute} (m-1-2);
\end{tikzpicture}
\end{center}
Namely, we aggregate the model using several vectors $w$, estimate each of the univariate (scalar) models $x_t=w_tz_t$ as a MA(1), and then make some algebraic computations to derive the parameters $\Theta$ and $\Sigma_u$ from them.

In order to make these closed-form computations possible, we need to derive explicit, closed-form relations that allow us to
\begin{itemize}
 \item express the parameters $\Theta,\Sigma_u$ as a function of the autocovariances $\Gamma_0:=\E{z_tz_t^T}$ and $\Gamma_1:=\E{z_tz_{t-1}^T}$. This step is described in Section~\ref{sec:autocov}.
 \item express $\Gamma_0$ and $\Gamma_1$ as a function of the autocovariances of the aggregate models. This step, together with a description of the whole estimation procedure, is shown in Section~\ref{sec:META}.
\end{itemize}
Asymptotic consistency and normality of the resulting estimator are proved in Section~\ref{sec:asympt}.

In contrast, a maximum likelihood (ML) estimator of the VMA model \eqref{eq:VMA_SS} would follow directly the ``missing arrow'' on our diagram between the observed data $z_t$ and the parameters $\Theta,\Sigma_u$. What we do instead essentially trades off one $N$-dimensional maximum likelihood procedure for several univariate ones. Since the ML estimator is affected by numerical convergence issues and bad complexity, growing with the dimensionality of the model \citep{Kas11}, estimating many small models rather than a large one is computationally favorable. In Section~\ref{sec:numerical} we compare the performance of ML with those of our suggested estimator, called \emph{META (Moment Estimation Through Aggregation)}. Simulation results show that the suggested approach is not only very simple and fast but is also remarkably efficient having performance that is as good as that of the standard multivariate maximum likelihood approach. 

\section{Literature review} \label{sec:literature}
As noted by \cite{DegH06}, ``\emph{There has been remarkably little work in developing multivariate versions of the exponential smoothing methods for forecasting.}'' We argue that this is probably due to the difficulties of estimating parameters in large-dimensional system. Our framework, also known as exponentially weighted moving average (EWMA), has a long tradition in forecasting time series (\cite{Muth}). Moreover, the EWMA belongs to the more general exponential smoothing family (see \cite{Gar06} and \cite{Hol04}). Despite its simplicity, this family represents a valid candidate in forecasting demand (see for example \cite{DekDO04}, \cite{FliL95}, \cite{Fli99}, \cite{MakSH00}, \cite{MooHS12}, \cite{MooHS13}).

More recently, the multivariate version of the EWMA model has been considered as production planning framework when forecasting aggregate demand. To overcome the estimation difficulties in the multivariate case, two strategies have been suggested, the so-called \emph{top-down} and \emph{bottom-up} approaches (see for example \cite{Lut_aggregate}, \cite{WIDIARTAetal} and \cite{SS}). In detail, let $w^T$ be a fixed weight vector, and suppose that we are interested in forecasting the aggregated process $x_t=w^Tz_t$. For instance, $w^T=\m{1/N & 1/N & \dots & 1/N}$ means that we are interested in the arithmetic mean of the observed variables. Then,
\begin{itemize}
 \item The \emph{top-down} approach consists in constructing directly $x_t=w^T z_t$ and forecasting this aggregate series applying a (scalar) estimator to $\{x_t\}_{t=1,2,\dots,N}$. Note that this loses information about the original process, since we do not use the single components of $z_t$ but only an aggregate function of them; hence less accuracy is to be expected.
 \item The \emph{bottom-up} approach consists in applying a scalar estimator to each of the $N$ time series $\{(z_t)_1\}_{t=1,2,\dots,T}, \{(z_t)_2\}_{t=1,2,\dots,T}, \dots \{(z_t)_N\}_{t=1,2,\dots,T}$, and forecasting each of them individually to obtain a forecast of the aggregated process $x_t=w^Tz_t$. Again, this method ignores the interdependence among the variables.
\end{itemize}
There is a vast literature comparing top-down and bottom-up approaches when forecasting aggregate demand. Without attempting to survey all the contributions on this topic, we refer to \cite{FliL95}, \cite{Fli99}, \cite{WeaKS01}, \cite{DekDO04}, \cite{ZotKC05}, \cite{ZotK07}, \cite{WIDIARTAetal}, \cite{CheB10}. More recently \cite{MooHS12}, \cite{MooHS13} consider in details alternative forecasting methods, such as the simple exponential smoothing, for predicting the demand for spare parts in the South Korean Navy.

A necessary condition to compare top-down and bottom-up approaches is the knowledge of the parameters of the multivariate demand planning framework. Indeed, once data are available, practitioners are faced with the challenge to estimate the parameters of the system whose dimension might be large. Therefore, a relevant gap left by \cite{SS} is that they do not provide any indication on how to derive the parameters of the framework using the available data (i.e. $y_t$). Indeed, quoting their conclusions: \emph{``this paper contains useful results assuming full knowledge of the parameters of the multivariate exponential smoothing. We are aware that this represents an ideal situation since, in empirical analysis, practitioners do not have such information and misspecification issues do usually arise [...]"}. This note fills this important empirical gap by providing an efficient and fast estimation procedure for the exponentially weighted moving average model, based on the same aggregation techniques used in their paper.

%[TODO: non so bene che fare con queste due frasi; riposizionala tu se serve] The framework adopted generalize the results achieved by \cite{WIDIARTAetal} by employing an unrestricted multivariate framework allowing for interdependency between its variables. When forecasting the aggregate demand a relevant issue frequently faced by the literature can be summarized as in Dunn et al. (1976, p. 68): ``\emph{Should statistical forecasts be constructed by aggregating data to each level for which forecasts are required or aggregating the forecasts from the lower levels?}''. These two approaches represents the so called ``top-down'' and ``bottom-up''. 

\section{Closed-form results} \label{sec:autocov}
It is easy to see that the autocovariances of the process $z_t$, expressed with both parametrizations \eqref{eq:MRW}--\eqref{eq:Matrix1} and \eqref{eq:VMA_SS}, are given by
\begin{equation}\label{EWMAautocov}
\begin{aligned}
\Gamma_{0} &:= \E{z_{t} z_{t}^T}=\Sigma_{u}+\Theta\Sigma_{u}\Theta^T=\Sigma_{\eta}+2\Sigma_{\epsilon},\\
\Gamma_{1} &:= \E{z_{t}  z_{t-1}^T}=-\Theta\Sigma_{u}=-\Sigma_{\epsilon}.
\end{aligned}
\end{equation}
It is important in the following that $\Gamma_1$ is a symmetric matrix. Hence it is easy to derive the following relations that express the structural parameters as a function of the reduced ones:
\[
 \Sigma_{\epsilon} = \Theta\Sigma_{u}, \quad \Sigma_{\eta} = \Sigma_{u}+\Theta\Sigma_{u}\Theta^T - 2\Sigma_{\epsilon}.
\]
The inverse relationship, i.e., how to construct the reduced parameters $\Theta,\Sigma_u$ in terms of the structural ones $\Sigma_{\eta},\Sigma_{\epsilon}$, is less obvious; we present it in the following result.
\begin{Proposition} \label{prop:param}
Consider the model \eqref{eq:MRW}--\eqref{eq:Matrix1} and its reparametrization \eqref{eq:VMA_SS}, and let $Q:=\Sigma_{\eta}\Sigma_{\epsilon}^{-1}$. Then, the following relation holds
\begin{equation}
\begin{array}{l}
\Theta=\frac{1}{2}\left(Q+2I-\left(Q^2+4Q\right)^{\frac{1}{2}}\right),\\
\Sigma_{u}=\Theta^{-1}\Sigma_{\epsilon}.\\
\end{array}
\label{eq:risultato}
\end{equation}
The matrix square root in this expression is well-defined since $Q^2+4Q$ is diagonalizable with all positive eigenvalues.
\end{Proposition}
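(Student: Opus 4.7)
The plan is to reduce the two autocovariance relations in \eqref{EWMAautocov} to a single scalar-style quadratic matrix equation in $\Theta$, solve it with the quadratic formula, and then pick the correct branch using the invertibility hypothesis.

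First I would read off $\Sigma_u = \Theta^{-1}\Sigma_\epsilon$ directly from $\Gamma_1 = -\Theta\Sigma_u = -\Sigma_\epsilon$, which already gives the second line of \eqref{eq:risultato}. Since $\Sigma_\epsilon = \Theta\Sigma_u$ is symmetric, this forces $\Theta\Sigma_u = \Sigma_u\Theta^T$, and multiplying by $\Theta$ on the left yields the key identity $\Theta\Sigma_u\Theta^T = \Theta^2\Sigma_u$. Substituting this into $\Sigma_u + \Theta\Sigma_u\Theta^T = \Sigma_\eta + 2\Sigma_\epsilon$, using $\Sigma_u = \Theta^{-1}\Sigma_\epsilon$, and multiplying on the right by $\Sigma_\epsilon^{-1}$ produces
\[
\Theta + \Theta^{-1} = Q + 2I, \qquad\text{equivalently,}\qquad \Theta^2 - (Q+2I)\Theta + I = 0.
\]

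Next I would apply the quadratic formula formally. The discriminant simplifies to $(Q+2I)^2 - 4I = Q^2 + 4Q$, so the candidate solutions are $\tfrac{1}{2}\bigl(Q+2I \pm (Q^2+4Q)^{1/2}\bigr)$. To justify the square root, I would observe that $Q = \Sigma_\eta\Sigma_\epsilon^{-1}$ is similar to the symmetric positive definite matrix $\Sigma_\epsilon^{-1/2}\Sigma_\eta\Sigma_\epsilon^{-1/2}$ and is therefore diagonalizable with strictly positive eigenvalues $q_i$. Since $Q^2+4Q$ is a polynomial in $Q$, it is diagonalizable in the same basis with eigenvalues $q_i^2 + 4q_i > 0$, so its principal square root exists and is real.

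The main delicate point is the branch choice. In the eigenbasis of $Q$, the eigenvalues of the two candidates are $\tfrac{1}{2}\bigl(q_i + 2 \pm \sqrt{q_i^2+4q_i}\bigr)$; both are positive, and their product is $\tfrac{1}{4}\bigl((q_i+2)^2 - (q_i^2+4q_i)\bigr) = 1$. Hence for each $i$ one of the two eigenvalues lies in $(0,1)$ and the other in $(1,\infty)$, corresponding to the minus and plus sign respectively. The invertibility assumption on \eqref{eq:VMA_SS} requires all eigenvalues of $\Theta$ to have modulus strictly less than $1$, which selects the minus sign uniformly across the spectrum and yields the formula \eqref{eq:risultato}.
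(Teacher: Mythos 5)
Your proof is correct and follows essentially the same route as the paper's: both reduce \eqref{EWMAautocov} to the quadratic matrix equation $\Theta^2-(Q+2I)\Theta+I=0$, diagonalize $Q$ (you via similarity to $\Sigma_\epsilon^{-1/2}\Sigma_\eta\Sigma_\epsilon^{-1/2}$, the paper by citing the product-of-positive-definite-matrices theorem), solve the scalar quadratics, and select the minus branch by invertibility. Your observation that the two candidate eigenvalues have product $1$ is a slightly cleaner way to see the branch dichotomy than the paper's explicit inequality $d_i<\sqrt{d_i^2+4d_i}<d_i+2$, but the argument is the same in substance.
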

\begin{proof}
Note that $\Gamma_{0}$ can be expressed as
\begin{equation} \label{forGamma0}
\Gamma_{0}=\Sigma_{u}+\Gamma_{1}\Sigma_{u}^{-1}\Gamma_{1},
\end{equation}
since $\Gamma_{1}=-\Theta\Sigma_{u}=-\Sigma_{u}\Theta^T$. Post-multiplying \eqref{forGamma0} by $\Sigma_{u}^{-1}$, we have
\begin{displaymath}
-\Gamma_{0}\Gamma_{1}^{-1}\Theta = \Gamma_{0}\Sigma_{u}^{-1}=I+\Gamma_{1}\Sigma_{u}^{-1}\Gamma_{1}\Sigma_{u}^{-1}=I+\Theta^2.
\end{displaymath}
Therefore $\Theta$ satisfies the quadratic matrix equation
\begin{equation} \label{qme}
\Theta^2+\Gamma_{0}\Gamma_{1}^{-1}\Theta+I=0,
\end{equation}
with $\Gamma_{0}\Gamma_{1}^{-1}=(\Sigma_\eta+2\Sigma_\epsilon)(-\Sigma_{\epsilon})^{-1}= -Q-2I$. The matrix $Q$ is always diagonalizable with positive eigenvalues, since it is the product of two positive-definite matrices \cite[Theorem~7.6.3]{HJ}. Hence we can set $Q=PDP^{-1}$, $D=\operatorname{diag}(d_1,d_2,\dots,d_N)$, with $d_i>0$ for each $i=1,2,\dots,N$. Pre- and post-multiplying \eqref{qme} by $P^{-1}$ and $P$, we get
\[
 \tilde{\Theta}^2 -(D+2I) \tilde{\Theta} + I = 0, \quad \tilde{\Theta}:= P^{-1}\Theta P.
\]
The solutions of this matrix equation are given by diagonal matrices $\tilde{\Theta}=\operatorname{diag}(g_{1},g_{2},\dots,g_{N})$, where
\[
 g_i^2 - (d_i+2)g_i + 1=0, \quad i=1,2,\dots,N.
\]
The usual formula for the quadratic solution gives
\begin{equation} \label{quadratic}
g_{i}=\frac{d_{i}+2\pm\sqrt{d_{i}^{2}+4d_{i}}}{2}.
\end{equation}
Note that $g_i$ are the eigenvalues of $\tilde{\Theta}$, and hence of $\Theta$. Since $d_i < \sqrt{d_i^2+4d_i} < d_i+2$ whenever $d_i>0$, we have $0<g_i<1$ if we choose the minus sign and $g_i>1$ if we choose the plus sign. Hence we choose the minus sign to obtain invertibility of the resulting system \eqref{eq:VMA_SS}.

Putting back together the matrices, we get
\[
 \Theta = P\frac{1}{2}\left(D+2I - (D^2+4D)^{1/2} \right)P^{-1} = \frac{1}{2}\left(Q+2I - (Q^2+4Q)^{1/2} \right).
\]
The matrix square root is well defined since $Q^2+4Q$ has positive eigenvalues $d_i^2+4d_i$, $i=1,2,\dots,N$.

Finally, the second equation in \eqref{eq:risultato} follows from the second one in \eqref{EWMAautocov}, since we have already observed that $g_i<0$ and thus $\Theta$ is nonsingular.
\end{proof}
\begin{Remark}
The results as in (\ref{eq:risultato}) are the multivariate extension of the univariate results (see for example \citep{Muth} and \citep[p.~68]{HARV2}) with $Q$ representing a ``signal to noise'' matrix ratio. In general, $Q$ is not a symmetric matrix and therefore neither $\Theta$ is.
\end{Remark}
The expression for $\Theta$ in \eqref{eq:risultato} is useful for forecasting the system \eqref{eq:MEWMA}. Indeed, using the lag operator $L$ (such that $L y_t=y_{t-1}$) we can write the optimal linear forecasting for  \eqref{eq:MEWMA} as
\begin{equation} \label{forecaster}
y^F_{t+1}=(I- \Theta)(I- \Theta L)^{-1}y_{t}=(I- \Theta)\sum_{j=0}^{\infty}\Theta^{j}y_{t-j}.
\end{equation}

\begin{Corollary}
The reduced form parameters can be expressed in terms of the autocorrelations of $z_t$ as
\begin{equation} \label{redasautocov}
\begin{aligned}
\Theta &= -\frac{1}{2}\left(\Gamma_{0}\Gamma_{1}^{-1}+\left(\Gamma_{0}\Gamma_{1}^{-1}\Gamma_{0}\Gamma_{1}^{-1}-4 I\right)^{\frac{1}{2}}\right),\\
\Sigma_u &=2\left(\Gamma_{0}\Gamma_{1}^{-1}+\left(\Gamma_{0}\Gamma_{1}^{-1}\Gamma_{0}\Gamma_{1}^{-1}-4 I\right)^{\frac{1}{2}}\right)^{-1}\Gamma_1.
\end{aligned}
 \end{equation}
\end{Corollary}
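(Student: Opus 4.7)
The plan is to reduce the corollary to a direct algebraic substitution into Proposition~\ref{prop:param}, using the autocovariance identities \eqref{EWMAautocov} to rewrite $Q$ in terms of $\Gamma_0$ and $\Gamma_1$.

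First I would invert the autocovariance relations from \eqref{EWMAautocov}: the second line immediately gives $\Sigma_{\epsilon} = -\Gamma_1$, and then the first line (using $\Sigma_\eta + 2\Sigma_\epsilon = \Gamma_0$) yields $\Sigma_{\eta} = \Gamma_0 + 2\Gamma_1$. Substituting these into the definition $Q := \Sigma_\eta \Sigma_\epsilon^{-1}$ produces
\[
 Q = (\Gamma_0 + 2\Gamma_1)(-\Gamma_1)^{-1} = -\Gamma_0\Gamma_1^{-1} - 2I,
\]
so $Q + 2I = -\Gamma_0\Gamma_1^{-1}$. This is exactly the identity already observed in the proof of Proposition~\ref{prop:param} right after equation \eqref{qme}, so no new hypothesis is required for $\Gamma_1$ to be invertible (it follows from $\Sigma_\epsilon > 0$).

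Next I would note the algebraic identity $Q^2 + 4Q = (Q+2I)^2 - 4I$, which upon substitution of $Q+2I = -\Gamma_0\Gamma_1^{-1}$ becomes
\[
 Q^2 + 4Q = \Gamma_0\Gamma_1^{-1}\Gamma_0\Gamma_1^{-1} - 4I.
\]
Since by Proposition~\ref{prop:param} the square root on the left-hand side is well-defined (the eigenvalues $d_i^2+4d_i$ are positive), the same holds for the right-hand side. Plugging $Q+2I = -\Gamma_0\Gamma_1^{-1}$ and this identity into the formula \eqref{eq:risultato} for $\Theta$ yields the first equation of \eqref{redasautocov} after factoring out the overall minus sign.

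Finally, for $\Sigma_u$, I would use $\Gamma_1 = -\Theta \Sigma_u$ (from \eqref{EWMAautocov}), which gives $\Sigma_u = -\Theta^{-1}\Gamma_1$; inverting the closed-form expression for $\Theta$ just derived and multiplying by $-\Gamma_1$ produces the second equation of \eqref{redasautocov}. Invertibility of $\Theta$ was already established in Proposition~\ref{prop:param} (its eigenvalues satisfy $0 < g_i < 1$). There is no real obstacle here; the whole proof is a careful rewriting, and the main thing to check is that signs and the order of non-commuting matrix products are tracked correctly.
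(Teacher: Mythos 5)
Your derivation is correct and is exactly the substitution the paper intends (the Corollary is stated without an explicit proof, but the key identity $\Gamma_0\Gamma_1^{-1}=-Q-2I$ already appears in the proof of Proposition~\ref{prop:param}, and your use of $Q^2+4Q=(Q+2I)^2-4I$ together with $\Sigma_u=-\Theta^{-1}\Gamma_1$ recovers both formulas with the signs and matrix orderings handled properly). No gaps.
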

%Where $\Gamma_{0}\Gamma_{1}^{-1}$ is the inverse of the autocorrelation matrix as defined in Chitturi(1974). 
Therefore, $\Gamma_{0}$, $\Gamma_{1}$ is the only information needed to obtain $\Theta$ and $\Sigma_{u}$. The reader might be tempted to use this result as an estimator, computing sample covariances $\hat{\Gamma}_{0}=\frac{1}{T}\sum_{t=1}^T z_{t}z_{t}^T$, $\hat{\Gamma}_{1}=\frac{1}{T}\sum_{t=1}^{T-1} z_{t}z_{t-1}^T$ and substituting them into \eqref{redasautocov}. In empirical analysis, however, these sample covariances might not be accurate enough. To solve this issue, in the next section we provide a method to derive these moments more accurately.

\section{Moment estimation through aggregation (\emph{META})} \label{sec:META}
%In the following section, we use the following convention to denote observed and estimated quantities: the hat symbol $\hat{\cdot}$ on top of a random variable denotes observation or estimation; hence, for instance, if $\Theta$ is a parameter and $u_t$ is a random variable, then $\hat{u}_t$, $t=1,2,T$, are the values of a realization of $u_t$ (which could be alternatively called $u_t(\omega)$, for $\omega$ in the underlying probability space $\Omega$), and $\hat{\Theta}$ denotes the estimator of $\Theta$ constructed using these observations.

Consider a generic multivariate MA(1) process
\begin{equation} \label{eq:VMAz}
  z_t = u_t - \Theta u_{t-1}, \quad \E{u_t u_t^T}=\Sigma_u, 
\end{equation}
and define its autocovariance matrices $\Gamma_k:=\E{z_t z_{t-k}^T}$; due to the structure of the process, $\Gamma_k=0$ for $\abs{k}>1$, and $\Gamma_0=\Gamma_0^T$.

We are interested in aggregate processes, that is, scalar processes of the form $x_t := w^T z_t$, for some vector $w\in\mathbb{R}^n$. This form includes in particular the components $(z_t)_1,(z_t)_2,\dots,(z_t)_N$ of the vector process $z_t$, which are obtained by setting $w=e_i$, for $j=1,2,\dots,N$, where $e_i$ is the $i$-th vector of the canonical basis, that is, the $i$-th column of $I_N$.

It turns out that if $\Gamma_1=\Gamma_1^T$ (as is the case in our EWMA setting, due to \eqref{EWMAautocov}), then we can recover these covariances by knowing those of some special aggregate processes.
\begin{Lemma} \label{gammatogamma}
 Let $z_t$ be a VMA(1) process \eqref{eq:VMAz}, and suppose that $\Gamma_1=\Gamma_1^T$. Given a vector $w\in\mathbb{R}^N$, $w\neq 0$, define the aggregate $x^{(w)}_t:=w^Tz_t$, and let $\gamma^{(w)}_k$ be its covariances.
 Then, the entries of $\Gamma_k$ are given by
 \begin{equation}  \label{GammaFromGammini}
  (\Gamma_k)_{i,j}=\begin{cases}
                    \gamma^{(e_i)}_k & i=j,\\
                    \frac12 \left(\gamma^{(e_i+e_j)}_k - \gamma^{(e_i)}_k - \gamma^{(e_j)}_k \right) & i \neq j.
                   \end{cases}
 \end{equation}
In particular, they are uniquely determined given the covariances of the $\frac{N(N+1)}{2}$ scalar processes constructed with vectors $w\in\mathcal{W}$,
\[
 \mathcal{W} :=  \{e_i \colon 1\leq i \leq N \} \cup \{e_i+e_j \colon 1 \leq i < j \leq N \}.
\]
\end{Lemma}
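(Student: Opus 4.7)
The plan is to reduce everything to the identity $\gamma_k^{(w)} = w^T \Gamma_k w$, and then to recover the individual entries of $\Gamma_k$ from this quadratic form via a polarization-type argument, using the symmetry of $\Gamma_k$ in a crucial way.

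First I would compute, for an arbitrary aggregation vector $w$, the covariances of $x_t^{(w)}=w^Tz_t$. Since $z_t$ has mean zero, so does $x_t^{(w)}$, and
\[
\gamma_k^{(w)} = \E{x_t^{(w)} x_{t-k}^{(w)}} = \E{(w^T z_t)(z_{t-k}^T w)} = w^T \E{z_t z_{t-k}^T} w = w^T \Gamma_k w.
\]
Specializing to $w=e_i$ immediately gives $\gamma_k^{(e_i)} = (\Gamma_k)_{i,i}$, which is the first case of \eqref{GammaFromGammini}.

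Next I would expand the quadratic form at $w = e_i + e_j$ (for $i\neq j$):
\[
\gamma_k^{(e_i+e_j)} = (e_i+e_j)^T \Gamma_k (e_i+e_j) = (\Gamma_k)_{i,i} + (\Gamma_k)_{j,j} + (\Gamma_k)_{i,j} + (\Gamma_k)_{j,i}.
\]
Subtracting $\gamma_k^{(e_i)} = (\Gamma_k)_{i,i}$ and $\gamma_k^{(e_j)} = (\Gamma_k)_{j,j}$ leaves $(\Gamma_k)_{i,j} + (\Gamma_k)_{j,i}$. Here is where the symmetry hypothesis enters: for $k=0$ symmetry is automatic from $\Gamma_0 = \E{z_t z_t^T}$, and for $k=1$ it is given by assumption; in both cases $(\Gamma_k)_{i,j} + (\Gamma_k)_{j,i} = 2(\Gamma_k)_{i,j}$, yielding the second case of \eqref{GammaFromGammini} after dividing by $2$.

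The only subtle point, and really the main obstacle, is that the bilinear form $w \mapsto w^T \Gamma_k w$ recovers only the symmetric part $(\Gamma_k + \Gamma_k^T)/2$ of $\Gamma_k$; without the hypothesis $\Gamma_1 = \Gamma_1^T$ one could not disentangle $(\Gamma_1)_{i,j}$ from $(\Gamma_1)_{j,i}$ using scalar aggregates alone. Under the hypothesis, however, the $\frac{N(N+1)}{2}$ covariances indexed by $\mathcal{W}$ suffice: the $N$ values $\gamma_k^{(e_i)}$ determine the diagonal, and the $\binom{N}{2}$ values $\gamma_k^{(e_i+e_j)}$ together with the diagonal determine the off-diagonal entries through \eqref{GammaFromGammini}, completing the proof.
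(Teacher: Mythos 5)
Your proposal is correct and follows essentially the same route as the paper: identify $\gamma_k^{(w)}=w^T\Gamma_k w$, then specialize to $w=e_i$ and $w=e_i+e_j$ and use the symmetry of $\Gamma_k$ to isolate the off-diagonal entries. Your added remark that the quadratic form only sees the symmetric part of $\Gamma_k$, which is why the hypothesis $\Gamma_1=\Gamma_1^T$ is essential, is a correct and worthwhile clarification of a point the paper leaves implicit.
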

\begin{proof}
 Note that $\gamma^{(w)}_k=\E{w^T z_t z_{t-k}^T w} = w^T\Gamma_k w$. Hence, $\gamma^{(e_i)}_k = (\Gamma_{k})_{ii}$, and $\gamma^{(e_i+e_j)}_k = (\Gamma_k)_{ii} + (\Gamma_k)_{ij} + (\Gamma_k)_{ji}+ (\Gamma_k)_{jj} = (\Gamma_k)_{ii} + 2(\Gamma_k)_{ij} +(\Gamma_k)_{jj}$.
\end{proof}
Each aggregate process can be reparametrized as a scalar MA(1) itself (see \citep{Lut_aggregate}); hence, one can write
\begin{equation} \label{MAini}
 x^{(w)}_t = v^{(w)}_t - \psi^{(w)} v^{(w)}_{t-1}, \quad \E{(v^{(w)}_t)^2} = \sigma^{(w)},
\end{equation}
for suitable white noise sequences $v^{(w)}_t$. Note that, although each $v^{(w)}_t$ is a white noise sequence on its own, two generic entries $v^{(w_1)}_{t_1}$ and $v^{(w_2)}_{t_2}$, for given $t_1,t_2$ and $w_1\neq w_2$, might be correlated.

One can use this representation to express the autocovariances as a function of these parameters:
\begin{equation}\label{autocovFromMAini}
 \begin{aligned}
  \gamma^{(w)}_0 &= (1+(\psi^{(w)})^2)\sigma^{(w)},\\
  \gamma^{(w)}_1 &= -\psi^{(w)}\sigma^{(w)}.
 \end{aligned}
\end{equation}

This approach suggests an estimation procedure as follows. Given $T$ observations %$\hat{z}_1,\hat{z}_2,\dots,\hat{z}_T$ 
of the process \eqref{eq:VMAz}:
\begin{enumerate}
 \item For each of the $N(N+1)/2$ vectors $w\in \mathcal{W}$, construct the aggregate data $x_t^{(w)} = w^T z_t$, and estimate the MA(1) model \eqref{MAini}, obtaining $\hat{\psi}^{(w)}$ and $\hat{\sigma}^(w)$.
 \item For each $w$, construct $\hat{\gamma}_0^{(w)}$ and $\hat{\gamma}_1^{(w)}$ using the formulas \eqref{autocovFromMAini}.
 \item Recover estimates $\hat{\Gamma}_0$ and $\hat{\Gamma}_1$ using \eqref{GammaFromGammini}.
 \item Recover estimates $\hat{\Theta}$ and $\hat{\Sigma}_u$ using \eqref{redasautocov}.
\end{enumerate}
The advantage of steps 1--3 of this procedure with respect to an estimator based on the sample moments $\frac1T \sum_{t=1}^T z_t z_{t-k}^T$ is that a maximum likelihood estimator as above yields more accurate values for the asymptotic moments.

For the sake of simplicity, in order to provide intuition to the reader, we give an example using a bivariate model.
Consider the following system with two variables
\begin{align*}
x_{1t}&=u_{1t}+\phi_{11}u_{1t-1}+\phi_{12}u_{2t-1},\\
x_{2t}&=u_{2t}+\phi_{21}u_{1t-1}+\phi_{22}u_{2t-1}.
\end{align*}
The previous model can be reparametrized equation-by-equation as
\begin{align*}
x_{1t}&=\upsilon_{1t}+\psi_1\upsilon_{1t-1},\\
x_{2t}&=\upsilon_{2t}+\psi_2\upsilon_{2t-1}
\end{align*}
with $\E{\upsilon_{it}^2}=\sigma_i$. Finally consider the MA(1) process derived from the simple aggregation of the two components
\begin{displaymath}
\begin{array}{c}
x_t=x_{1t}+x_{2t}=a_{t}+\alpha a_{1t-1}\\
\end{array}
\end{displaymath}
with $\E{a_{t}^2}=\sigma_a$. Using the results above, we can now rewrite $\Gamma_0$ and $\Gamma_1$ as function of the parameters of the aggregate models as follows
\begin{align*}
\Gamma_0 &=\m{
(1+\psi_1^2)\sigma_{1} & \frac{1}{2}[(1+\alpha^2)\sigma_{a}-(1+\psi_1^2)\sigma_{1}-(1+\psi_2^2)\sigma_{2}]  \\
\frac{1}{2}[(1+\alpha^2)\sigma_{a}-(1+\psi_1^2)\sigma_{1}-(1+\psi_2^2)\sigma_{2}]  & (1+\psi_2^2)\sigma_{2}\\
},\\
\Gamma_1 &=\m{
\psi_1\sigma_{1} & \frac{1}{2}[\alpha\sigma_{a}-\psi_1\sigma_{1}-\psi_2\sigma_{2}]  \\ \\
\frac{1}{2}[\alpha\sigma_{a}-\psi_1\sigma_{1}-\psi_2\sigma_{2}]  & \psi_2\sigma_{2}\\
}.
\end{align*}

\section{Asymptotic properties} \label{sec:asympt}
As a first result, we prove that the aggregate MA processes that we estimate are well-behaved.
\begin{Lemma} \label{lemma:invini}
 Suppose that the process \eqref{eq:VMAz} is invertible. Then, for each $w\in\mathbb{R}^N$ with $w\neq 0$, the process \eqref{MAini} is invertible, and $\sigma^{(w)}>0$.
\end{Lemma}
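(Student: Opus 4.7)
The plan is to work in the frequency domain: a real stationary scalar process with autocovariances $(\gamma_0,\gamma_1)$ is an invertible MA(1) with positive innovation variance if and only if its spectral density $\frac{1}{2\pi}(\gamma_0 + 2\gamma_1\cos\omega)$ is strictly positive on $\mathbb{R}$, equivalently $\gamma_0 > 2\abs{\gamma_1}$. So I would aim to establish this strict positivity for each aggregate $x^{(w)}_t$.

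To this end I would first write the spectral density of $z_t$ in the standard factorized form
\[
 f_z(\omega) \;=\; \frac{1}{2\pi}\bigl(I - \Theta e^{-i\omega}\bigr)\Sigma_u \bigl(I - \Theta^T e^{i\omega}\bigr),
\]
obtained by expanding and matching against $\Gamma_0,\Gamma_1,\Gamma_{-1}=\Gamma_1^T$ computed from \eqref{eq:VMAz}. The aggregate spectral density is $f_{x^{(w)}}(\omega) = w^T f_z(\omega) w$, and since the scalar $w^T\Gamma_1 w$ equals its own transpose $w^T\Gamma_1^T w$, this collapses to the real expression $\frac{1}{2\pi}\bigl(\gamma_0^{(w)} + 2\gamma_1^{(w)}\cos\omega\bigr)$.

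Next, I would test $f_{x^{(w)}}$ at the two extremes $\omega=0$ and $\omega=\pi$ only: these yield $\frac{1}{2\pi} w^T (I\mp\Theta)\Sigma_u (I\mp\Theta)^T w$. Invertibility of the VMA(1) means all eigenvalues of $\Theta$ lie strictly inside the unit disk, so $\pm 1$ are not eigenvalues, $I\mp\Theta$ are nonsingular, and combined with $\Sigma_u>0$ both of the above matrices are positive definite. For any $w\neq 0$ this gives $\gamma_0^{(w)} \pm 2\gamma_1^{(w)} > 0$, hence $\gamma_0^{(w)} > 2\abs{\gamma_1^{(w)}}$, and in particular $\gamma_0^{(w)}>0$.

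Finally, from the MA(1) identities $(1+(\psi^{(w)})^2)\sigma^{(w)} = \gamma_0^{(w)}$ and $-\psi^{(w)}\sigma^{(w)} = \gamma_1^{(w)}$, eliminating $\sigma^{(w)}$ yields the quadratic $\gamma_1^{(w)}\psi^2 + \gamma_0^{(w)}\psi + \gamma_1^{(w)} = 0$ (with the trivial case $\gamma_1^{(w)}=0$ giving $\psi^{(w)}=0$ and $\sigma^{(w)}=\gamma_0^{(w)}>0$). Its discriminant is positive by the previous step, and by Vieta its roots multiply to $1$ and sum to $-\gamma_0^{(w)}/\gamma_1^{(w)}$; hence exactly one root has modulus strictly below $1$, and that root must have the opposite sign of $\gamma_1^{(w)}$. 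Choosing this invertible root as $\psi^{(w)}$ gives $\abs{\psi^{(w)}}<1$ and $\sigma^{(w)} = -\gamma_1^{(w)}/\psi^{(w)} > 0$. The only delicate point is this final sign check on $\sigma^{(w)}$; the rest of the argument is a direct reduction from $\Sigma_u>0$ and the eigenvalue condition on $\Theta$.
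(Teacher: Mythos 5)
Your proof is correct and follows essentially the same route as the paper: both arguments rest on the strict positivity of the matrix spectral density (autocovariance generating function) of $z_t$ on the unit circle, inherited by the scalar quantity $w^T \Gamma(z) w$, which characterizes an invertible MA(1) with $\sigma^{(w)}>0$. The paper invokes that characterization directly, while you make the final step explicit by reducing the positivity check to $\omega\in\{0,\pi\}$ and selecting the invertible root of the quadratic for $\psi^{(w)}$; both versions are sound.
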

\begin{proof}
Invertibility of \eqref{eq:VMAz} means that its autocovariance generating function \cite[\textsection 3.5]{BD} $\Gamma(z)$ is nonsingular for each $z$ on the unit circle, i.e.,
\[
 \Gamma(z) > 0 \quad \text{if $\abs{z}=1$}.
\]
The autocovariance generating function of \eqref{MAini} is 
\[
 (1-z \psi^{(w)})\sigma^{(w)} (1-z^{-1}\psi^{(w)})  = \gamma^{(w)}(z) = w^T \Gamma(z) w.
\]
Since $\Gamma(z)$ is a positive-definite matrix for $\abs{z}=1$, we also have that $w^T \Gamma(z) w > 0$. Hence $(1-z \psi^{(w)})\sigma^{(w)} (1-z^{-1}\psi^{(w)})>0$ whenever $\abs{z}=1$, and this implies that $\sigma^{(w)}>0$ and that there is an invertible representation with $\abs{\psi^{(w)}}<1$ for \eqref{MAini}.
\end{proof}
Therefore in the following we assume without further mention that $\abs{\psi^{(w)}}<1$.

The (quasi)-maximum likelihood estimator on the representation \eqref{MAini}, using zero initial values for simplicity, is given by (dropping the $\cdot ^{(w)}$ superscript for ease of notation)
\[
 (\hat{\psi},\hat{\sigma}) := \arg \min_{(\tilde{\psi},\tilde{\sigma})} \sum_{t=1}^T \ell_t(\tilde{\psi},\tilde{\sigma}),
\]
with the unconditional negative log-likelihood function $\ell_t$ given for each pair of reals $\tilde{\psi},\tilde{\sigma}$ by
\begin{equation} \label{ll}
\ell_t(\tilde{\psi},\tilde{\sigma}):= \frac12 \log \tilde{\sigma} + \frac{\tilde{v}_t^2}{2\tilde{\sigma}}, \quad \tilde{v}_t := \sum_{k=0}^{t-1} \tilde{\psi}^k x_{t-k}.
\end{equation}
The $\tilde{v}_t$ satisfy the linear recurrence $\tilde{v}_1=x_1$, $\tilde{v}_t=x_t+\tilde{\psi} \tilde{v}_{t-1}$, and are a function of $\tilde{\psi}$ and of the observations. We set for brevity $\tilde{v}'_t:=\frac{\partial}{\partial\tilde{\psi}}\tilde{v}_t$. Notice that $\tilde{v}'_t$ is a linear function of $\tilde{v}_1,\dots,\tilde{v}_{t-1}$, as can be proved by induction using the relation
\[
 \tilde{v}'_t = \tilde{v}_{t-1} + \tilde{\psi} \tilde{v}'_{t-1}.
\]
Moreover, when $\tilde{\psi}=\psi$ (the correct value), then $\tilde{v}_t=v_t$. We first evaluate the Hessian of the likelihood at the exact system parameters $(\psi,\sigma)$: by ergodicity,
\begin{multline} \label{hessian}
 \frac1T \sum \nabla^2 \ell_t(\psi,\sigma) \to
 \E{\m{\frac{\partial^2}{\partial\tilde{\psi}^2} \ell_t(\psi,\sigma) & \frac{\partial^2}{\partial\tilde{\psi} \partial \tilde{\sigma}} \ell_t(\psi,\sigma) \\ \frac{\partial^2}{\partial\tilde{\psi} \partial \tilde{\sigma}} \ell_t(\psi,\sigma) & \frac{\partial^2}{\partial\tilde{\sigma}^2} \ell_t(\psi,\sigma) }}
 \\=\m{
  \E{\frac1\sigma \left( {v'_t}^2 + \frac{\partial^2 v_t}{\partial \psi^2} v_t \right)} &
  \E{-\frac1{\sigma^2} v'_t v_t}\\
  \E{-\frac1{\sigma^2} v'_t v_t} &
  \E{\frac1{2\sigma^2}\left(\frac{2v_t^2}{\sigma}-1\right)}
 } = \m{\frac{1}{1-\psi^2} & 0 \\ 0 & \frac{1}{2\sigma^2}}.
\end{multline}
In evaluating these expected values, we have used the following facts:
\begin{itemize}
 \item $\E{v_t^2}=\sigma$.
 \item $v'_t$ and $\frac{\partial^2 v_t}{\partial\tilde{\psi}^2}$ are uncorrelated from $v_t$, since they are linear functions of $v_1, v_2, \dots, v_{t-1}$.
 \item $\E{{v'_t}^2}=\frac{\sigma}{1-\psi^2}$. The simplest way to prove this relation is through
 \begin{multline*}
 \E{{v'_t}^2} = \E{\left(v_{t-1}+\psi v'_{t-1} \right)^2}
 =\E{v_{t-1}^2 + 2v_{t-1}\psi v'_{t-1} + \psi^2{v'_{t-1}}^2} \\= \sigma + 0 + \psi^2\E{{v'_{t-1}}^2},
 \end{multline*}
and by stationarity $\E{{v'_t}^2} = \E{{v'_{t-1}}^2}$.
\end{itemize}

We continue by proving that the estimates of the scalar parameters $\hat{\psi}^{(w)}, \hat{\sigma}^{(w)}$ are consistent and asymptotically normal. The consistency part is easier, since we can consider each aggregated process independently.
\begin{Lemma} \label{consistency1}
 Let the model \eqref{eq:VMAz} be stationary and ergodic, with $\Sigma_u>0$. Then, the maximum likelihood estimators $\hat{\psi}^{(w)}, \hat{\sigma}^{(w)}$ are asymptotically consistent.
\end{Lemma}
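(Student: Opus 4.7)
The plan is to adapt the classical QMLE consistency argument for invertible MA(1) processes (in the spirit of \cite{BD}), exploiting Lemma~\ref{lemma:invini} to know a priori that $\abs{\psi^{(w)}}<1$ and $\sigma^{(w)}>0$. First I would restrict the minimisation to a compact parameter set
\[
K = \{(\tilde\psi,\tilde\sigma) : \abs{\tilde\psi}\le 1-\delta,\ \underline\sigma \le \tilde\sigma \le \overline\sigma\},
\]
with $\delta>0$ small enough and $[\underline\sigma,\overline\sigma]$ containing the true $\sigma^{(w)}$ in its interior. This is the standard reduction that allows one to invoke an argmin-converges-to-argmin theorem; since $\{x_t^{(w)}\}$ is a measurable function of the stationary ergodic process $\{z_t\}$, it is itself stationary and ergodic, so the tools are available.

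Next I would dispose of the zero-initial-condition effect by replacing the recursively defined $\tilde v_t$ with the stationary series $\bar v_t(\tilde\psi) := \sum_{k=0}^\infty \tilde\psi^k x^{(w)}_{t-k}$. The difference $\tilde v_t - \bar v_t = -\sum_{k=t}^\infty \tilde\psi^k x^{(w)}_{t-k}$ has $L^2$-norm bounded by a geometric tail $C(1-\delta)^t$ uniformly on $K$, so $\frac{1}{T}\sum_t(\tilde v_t^2-\bar v_t^2)\to 0$ (in $L^1$, uniformly) by Cauchy--Schwarz. Then the Birkhoff ergodic theorem yields, for each fixed $(\tilde\psi,\tilde\sigma)\in K$,
\[
\frac{1}{T}\sum_{t=1}^T \ell_t(\tilde\psi,\tilde\sigma)\ \xrightarrow{\text{a.s.}}\ L(\tilde\psi,\tilde\sigma) := \tfrac{1}{2}\log\tilde\sigma + \frac{G(\tilde\psi)}{2\tilde\sigma}, \quad G(\tilde\psi):=\E{\bar v_t(\tilde\psi)^2}.
\]
Substituting $x_t^{(w)} = v_t - \psi v_{t-1}$ (writing $\psi=\psi^{(w)}$, $\sigma=\sigma^{(w)}$) gives $\bar v_t(\tilde\psi) = v_t + (\tilde\psi-\psi)\sum_{k\ge 1}\tilde\psi^{k-1}v_{t-k}$ and hence $G(\tilde\psi) = \sigma\bigl(1 + (\tilde\psi-\psi)^2/(1-\tilde\psi^2)\bigr)$. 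Minimising $L$ first in $\tilde\sigma$ yields $\tilde\sigma_*(\tilde\psi)=G(\tilde\psi)$; the concentrated limit $\tfrac12\log G(\tilde\psi)+\tfrac12$ is then uniquely minimised on $K$ at $\tilde\psi=\psi$, at which point $\tilde\sigma_*(\psi)=\sigma$. Hence $L$ has the unique global minimiser $(\psi,\sigma)$ on $K$.

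The main obstacle, and the step requiring most care, is upgrading pointwise a.s.\ convergence to \emph{uniform} convergence on $K$; this is what the argmin-converges-to-argmin machinery actually consumes. I would handle it by a stochastic-equicontinuity argument: on $K$ the map $(\tilde\psi,\tilde\sigma)\mapsto\ell_t(\tilde\psi,\tilde\sigma)$ is jointly continuous, and the constraint $\abs{\tilde\psi}\le 1-\delta$ together with $\E{(x_t^{(w)})^2}<\infty$ provides an integrable stationary envelope for $\ell_t$ and for its gradient on $K$ (the tail of the series defining $\bar v_t(\tilde\psi)$ and $\partial_{\tilde\psi}\bar v_t(\tilde\psi)$ is controlled by a geometric series with ratio $1-\delta$). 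A standard covering/equicontinuity argument then promotes the pointwise ergodic limit to uniform convergence on $K$, and combining uniform convergence with the unique minimiser of $L$ yields $(\hat\psi^{(w)},\hat\sigma^{(w)})\to(\psi^{(w)},\sigma^{(w)})$ almost surely, as required.
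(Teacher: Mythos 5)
Your proof is correct, and it reaches the same conclusion by a more self-contained route than the paper. The paper's own proof is short: it verifies the hypotheses of a general QML consistency theorem of Ling and McAleer (their Theorem 1(a)) --- namely that the maximizer is isolated, which follows from the nonsingular limiting Hessian computed in \eqref{hessian}, and that $\sup_{\tilde\psi}\E{\tilde v_t^2}<\infty$ on a compact set with $\abs{\tilde\psi}<1$, which it gets from the same expansion $\tilde v_t = v_t + (\tilde\psi-\psi)\sum_k \tilde\psi^{k}v_{t-k-1}$ that you derive. You instead run the classical Wald-type argument from first principles: truncation-error control to pass from the recursive $\tilde v_t$ to the stationary $\bar v_t(\tilde\psi)$, Birkhoff's ergodic theorem for pointwise convergence of the criterion, explicit identification of the limit $L(\tilde\psi,\tilde\sigma)=\tfrac12\log\tilde\sigma + G(\tilde\psi)/(2\tilde\sigma)$ with $G(\tilde\psi)=\sigma\bigl(1+(\tilde\psi-\psi)^2/(1-\tilde\psi^2)\bigr)$ and its unique minimizer $(\psi,\sigma)$, and a uniform law of large numbers via an integrable envelope on the compact set $K$. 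Your computation of $G$ is correct (it uses only that $v_t$ is uncorrelated weak white noise, which is all that is available here) and in fact cleans up a small indexing slip in the paper's displayed expansion of $\tilde v_t$. What the paper's route buys is brevity, at the cost of importing an external theorem; what yours buys is transparency, in particular making the identification step (why the limit criterion is minimized only at the true parameters) explicit, which the paper leaves implicit inside the cited result. Both arguments share the same implicit restriction of the parameter space to a compact set containing the true values, justified by Lemma~\ref{lemma:invini}, so neither is more general than the other on that point.
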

\begin{proof}
Let us consider the generic combination $x_t=w^T z_t$, $t=1,2,\dots,T$; as stated above, this is a MA(1) process with weak (uncorrelated but not independent) white noise $v_t$.

We make use the general results on ML consistency in \cite[Theorem 1(a)]{LinM10}. Since the Hessian \eqref{hessian} is asymptotically nonsingular, the maximizing point is isolated. We have
\[
 \tilde{v}_t=\sum_{i\geq 0} \tilde{\psi}^i x_{t-i} = \sum_{i\geq 0} \tilde{\psi}^i (v_{t-i}-\psi v_{t-i-1}) = v_t + \sum_{i\geq 0} \tilde{\psi}^i (\tilde{\psi}-\psi) v_{t-i},
\]
hence
\[
 \E{\tilde{v}_t^2} = \sigma \left(1+\sum_{i\geq 0} \tilde{\psi}^{2i} (\tilde{\psi}-\psi)^2\right).
\]
If we restrict the parameter set to a compact set with $\tilde{\psi}<1$, the sum converges and thus $\sup_{\tilde{\psi}}\E{\tilde{v}_t^2}<\infty$. Hence the hypotheses in \cite{LinM10} hold and each aggregated process is asymptotically consistent.
\end{proof}
Establishing asymptotic normality is more involved: since the $v^{(w)}_t$ are neither independent nor uncorrelated from each other, we cannot rely on the classical central limit results. We use instead a central limit result for weakly dependent sequences from \cite{PelU06}, which we summarize and report as follows.
\begin{Theorem} \label{th:PelU06}
 For an i.i.d. sequence of random variables $(Y_i)_{i\in\mathbb{Z}}$, denote by $\mathcal{F}_a^b$ the $\sigma$-field generated by $Y_t$ with $a\leq t\leq b$ and define $\xi_t=f(Y_t, Y_{t-1}, \dots)$, $t\in\mathbb{Z}$. Assume that $\E{\xi_0}=0$, $\E{\xi_0^2}<\infty$, and
\begin{equation} \label{PelUcond}
 \sum_{t=1}^{\infty}\frac{1}{\sqrt{t}}\norm{\xi_0-\E{\xi_0 \mid \mathcal{F}^0_{-t}}}_2<\infty.
\end{equation}
Then,
\begin{equation} \label{PelUresult}
 \frac{1}{\sqrt{T}}\sum_{t=1}^T \xi_t \to N(0,\E{\xi_0^2}).
\end{equation}
\end{Theorem}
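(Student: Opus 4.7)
The plan is to invoke the martingale approximation method of Gordin, as sharpened by Maxwell--Woodroofe and by Peligrad--Utev for stationary causal Bernoulli shifts under $L^2$ near-epoch conditions of the type \eqref{PelUcond}. The idea is to decompose $\sum_{t=1}^T \xi_t$ as a sum of stationary martingale differences plus a telescoping $L^2$ coboundary, and then apply the classical Billingsley--Ibragimov martingale central limit theorem to the main term while absorbing the coboundary into an $o_p(\sqrt{T})$ remainder.

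Concretely, I would introduce the orthogonal projection operators $P_k\xi := \E{\xi \mid \mathcal{F}^k_{-\infty}} - \E{\xi \mid \mathcal{F}^{k-1}_{-\infty}}$ on $L^2$ and set
\[
 D_t := \sum_{k \geq 0} P_t\, \xi_{t+k}.
\]
By construction, provided the series converges in $L^2$, the sequence $D_t$ is stationary and a martingale difference with respect to the filtration $\mathcal{F}^t_{-\infty}$; moreover the tower property yields the coboundary identity $\xi_t - D_t = R_{t+1} - R_t$ for a stationary $R_t \in L^2$, so that telescoping gives
\[
 \frac{1}{\sqrt{T}}\sum_{t=1}^T \xi_t \;=\; \frac{1}{\sqrt{T}} \sum_{t=1}^T D_t \;+\; \frac{R_{T+1} - R_1}{\sqrt{T}}.
\]
The second term is $o_p(1)$ by stationarity of $R_t$ in $L^2$, and the first converges in distribution to $N(0,\E{D_0^2})$ by the martingale CLT for stationary ergodic sequences.

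The main obstacle is establishing the $L^2$-convergence of the series defining $D_t$ under the weighted summability hypothesis \eqref{PelUcond}. This reduces to a maximal inequality that bounds $\norm{\sum_{k=0}^{n-1} P_0 \xi_k}_2$ in terms of the approximation errors $\norm{\xi_0 - \E{\xi_0 \mid \mathcal{F}^0_{-t}}}_2$; the weight $1/\sqrt{t}$ in \eqref{PelUcond} arises precisely from the $\sqrt{n}$ scaling of $L^2$ norms of martingale sums in a dyadic block argument. This inequality is the technical centerpiece of \cite{PelU06} and is not a routine computation. A secondary subtlety I would flag is the identification of $\E{D_0^2}$ with the stated limiting variance $\E{\xi_0^2}$: for a general stationary sequence these two quantities differ by twice the sum of the cross-covariances $\E{\xi_0\xi_k}$, so the equality as stated must rely either on additional orthogonality of the $\xi_t$ in the intended application, or on a more careful variance accounting inside \cite{PelU06}. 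Since Theorem~\ref{th:PelU06} is imported as a black box, I would not re-derive the underlying maximal inequality here but simply cite the Peligrad--Utev result once the martingale decomposition above is in place.
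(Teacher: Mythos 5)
The paper does not actually prove this statement: it is imported as a black box from \cite{PelU06}, and the text following the theorem merely records that it is the special case of their Theorem~1/Corollary~5 obtained by taking the weights $a_i=1$ for $i=0$ and $a_i=0$ otherwise, so that $b_n=\sqrt{n}$. Your proposal goes further than the paper by sketching the internal machinery of the cited result (the Gordin/Maxwell--Woodroofe martingale approximation), but it too ultimately defers the decisive step to \cite{PelU06}, so both arguments rest on the same citation; yours is the more informative of the two. Two caveats on your sketch. First, the clean coboundary identity $\xi_t-D_t=R_{t+1}-R_t$ with $D_t=\sum_{k\geq0}P_t\xi_{t+k}$ and a \emph{stationary} $R_t\in L^2$ requires $\sum_{k}\norm{P_0\xi_k}_2<\infty$, whereas \eqref{PelUcond} controls $\norm{P_0\xi_k}_2\leq\norm{\xi_0-\E{\xi_0\mid\mathcal{F}^0_{-k+1}}}_2$ only with the weight $k^{-1/2}$ (take $\norm{\xi_0-\E{\xi_0\mid\mathcal{F}^0_{-k}}}_2\sim k^{-1/2}(\log k)^{-2}$ to see the gap). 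This is exactly why Peligrad and Utev replace the exact coboundary by a Ces\`aro-type martingale approximation with remainder $o(\sqrt{T})$ in $L^2$, controlled by their maximal inequality; your plan as literally written would stall at the convergence of the series defining $D_t$, though you correctly locate the real work in that maximal inequality. Second, your flag about the variance is justified and points to an imprecision in the statement itself: the martingale method yields the long-run variance $\lim_T\E{S_T^2}/T=\E{D_0^2}$, which equals $\E{\xi_0^2}$ only when the cross-covariances $\E{\xi_0\xi_k}$, $k\neq0$, vanish. In the paper's application the $\xi_t$ are score components built from a merely uncorrelated (not independent) white noise, so this orthogonality is not automatic; since Theorem~\ref{normality1} uses only the normality and never the explicit variance, nothing downstream breaks, but the variance in \eqref{PelUresult} should properly be read as the long-run variance.
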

Indeed, \cite{PelU06} contains a stronger result on triangular sequences (Corollary~5); our statement \eqref{PelUresult} is a special case that can be obtained by setting
\[
 a_i=\begin{cases}
      1 & i=0\\
      0 & \text{otherwise}
     \end{cases}
\]
in the thesis of their Theorem~1, so that $b_n=\sqrt{n}$.

In the process \eqref{eq:VMA_SS}, the i.i.d. variables are $Y_t=\m{\epsilon_t\\ \eta_t}$, and we aim to prove that each of the $\nabla \ell_t(\psi,\sigma)$ can be chosen as a $\xi_t$ that satisfies the above condition \eqref{PelUcond}. We start with a couple of lemmas.

\begin{Lemma} \label{pellemma1}
Consider the process \eqref{eq:VMA_SS}, with i.i.d. variables $Y_t=\m{\epsilon_t\\ \eta_t}$. For a fixed $w\in\mathbb{R}^N$, $w\neq 0$, define $v^{(w)}_t$ and $\psi^{(w)}$ as above. Then, there are vectors $g_k,h_k\in \mathbb{R}^{2N}$ such that
 \begin{align} \label{gdec}
  v^{(w)}_t &= \sum_{k=0}^{\infty} g_k Y_{t-k},\\
  {v'}^{(w)}_t &= \sum_{k=0}^{\infty} h_k Y_{t-k}, \label{hdec}
 \end{align}
with $\norm{g_k}=O((\psi^{(w)})^k)$ and $\norm{h_k}=O(k(\psi^{(w)})^k)$ for $k\to\infty$.
\end{Lemma}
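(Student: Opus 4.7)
The plan is to unwind everything back to the i.i.d.\ innovations $Y_t=\bigl[\epsilon_t^T\ \eta_t^T\bigr]^T$ via the structural representation \eqref{eq:MRW}. First I would observe that $z_t = \eta_t + \epsilon_t - \epsilon_{t-1}$ is literally a linear combination of $Y_t$ and $Y_{t-1}$: writing $A=[\,I\ I\,]$ and $B=[-I\ 0\,]$, one has $z_t = AY_t + BY_{t-1}$. Consequently, for any fixed $w$, the scalar aggregate $x^{(w)}_t = w^T z_t$ takes the form $x^{(w)}_t = a^T Y_t + b^T Y_{t-1}$ with $a=A^Tw$, $b=B^Tw$. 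This is the ``base'' decomposition from which the two expansions \eqref{gdec}--\eqref{hdec} will be generated.

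Next I would invoke Lemma~\ref{lemma:invini}, which guarantees $|\psi^{(w)}|<1$, to invert \eqref{MAini} and write
\[
 v^{(w)}_t = \sum_{i=0}^{\infty} (\psi^{(w)})^i\, x^{(w)}_{t-i}.
\]
Substituting the expression for $x^{(w)}_{t-i}$ and reindexing by $k=i$ (contribution of $a$) and $k=i+1$ (contribution of $b$) yields $v^{(w)}_t = \sum_{k\ge 0} g_k^T Y_{t-k}$ with $g_0 = a$ and $g_k = (\psi^{(w)})^k a + (\psi^{(w)})^{k-1} b$ for $k\ge 1$. The bound $\norm{g_k}\le |\psi^{(w)}|^{k-1}(|\psi^{(w)}|\,\norm{a}+\norm{b}) = O((\psi^{(w)})^k)$ is then immediate from the triangle inequality; note that the extra factor $|\psi^{(w)}|^{-1}$ in the $b$-term is just a constant.

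For \eqref{hdec}, I would differentiate the series $\tilde v_t = \sum_{i\ge 0}\tilde\psi^{\,i}x_{t-i}$ term by term; this is justified because the differentiated series $\sum_{i\ge 1} i\,\tilde\psi^{\,i-1} x_{t-i}$ converges uniformly on compact subsets of $\{|\tilde\psi|<1\}$ (standard fact about power series, applied pathwise). Evaluating at $\tilde\psi = \psi^{(w)}$ gives $v'^{(w)}_t = \sum_{i\ge 1} i (\psi^{(w)})^{i-1}(a^T Y_{t-i} + b^T Y_{t-i-1})$. Reindexing exactly as before produces $h_k = k(\psi^{(w)})^{k-1}a + (k-1)(\psi^{(w)})^{k-2}b$ for $k\ge 2$ (with $h_1 = a$), and a straightforward estimate gives $\norm{h_k} = O(k(\psi^{(w)})^k)$, the extra factor of $k$ coming from the differentiation.

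The argument is essentially bookkeeping: the two real ingredients are (i) the fact that $z_t$ lives in the two-step span of the innovations $Y_t$, which follows from the structural form, and (ii) the geometric decay $|\psi^{(w)}|<1$ supplied by Lemma~\ref{lemma:invini}. The only point that requires a brief justification is the term-by-term differentiation of the power series defining $v_t$, but this presents no difficulty within the region of invertibility.
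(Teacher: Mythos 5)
Your proof is correct and follows essentially the same route as the paper's: both express $x^{(w)}_t$ as a linear function of $Y_t$ and $Y_{t-1}$, invert the MA(1) via the geometric series $\sum_k \psi^k L^k$ (justified by $\abs{\psi^{(w)}}<1$ from Lemma~\ref{lemma:invini}), and reindex to read off $g_k$ and $h_k$ with the stated decay rates. The paper merely phrases the bookkeeping in lag-operator notation and leaves the differentiated series implicit, whereas you spell out the term-by-term differentiation; the content is identical.
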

\begin{proof}
 Let us drop the superscript $(w)$ for clarity. Using the lag operator $L$, one has
 \[
x_t = w^T z_t = w^T(\eta_t + (I-L)\epsilon_t) = \m{w^T & w^T}Y_t-L\m{0 & w^T}Y_t.
 \]
Hence 
\begin{multline*}
 v_t = (1-\psi L)^{-1} x_t = \sum_{k\geq 0} \psi^k L^k x_t = \sum_{k \geq 0} \psi^k L^k \m{w^T & w^T}Y_t - \sum_{k\geq 0} \psi^k L^{k+1}\m{0 & w^T}Y_t \\= \m{w^T & w^T} Y_0 + \sum_{k\geq 0} \psi^k L^{k+1} \m{\psi w^T & (\psi-1)w^T}Y^T.
\end{multline*}
Similarly, starting from
\[
 v'_t = \sum_{k \geq 1} k \psi^{k-1} L^k x_t,
\]
one gets the other result.
\end{proof}

\begin{Lemma} \label{pellemma2}
Consider the process \eqref{eq:VMA_SS}, with i.i.d. variables $Y_t=\m{\epsilon_t\\ \eta_t}$ with finite fourth moment. For a fixed $w\in\mathbb{R}^N$, $w\neq 0$, define $v^{(w)}_t$ and $\psi^{(w)}$ as above. Then, condition \eqref{PelUcond} holds for both $\xi_t={v'}^{(w)}_t v^{(w)}_t$ and $\xi_t=(v^{(w)}_t)^2-\sigma^{(w)}$.
\end{Lemma}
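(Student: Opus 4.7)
The plan is to exploit the infinite-series decompositions of $v_0^{(w)}$ and ${v'}_0^{(w)}$ provided by Lemma~\ref{pellemma1}. For each $t\geq 1$, I would split every series at index $t$, writing $v_0=A_t+R_t$ and $v'_0=B_t+S_t$, where $A_t$ and $B_t$ collect the terms with $k=0,\dots,t$ (hence are $\mathcal{F}^0_{-t}$-measurable), while the tails $R_t$ and $S_t$ depend only on the innovations $Y_s$ with $s<-t$ and are therefore independent of $\mathcal{F}^0_{-t}$. Using $\E{R_t}=\E{S_t}=0$, expanding the square or the product and taking the conditional expectation yields the clean identities
\[
 v_0^2-\E{v_0^2\mid\mathcal{F}^0_{-t}}=2A_tR_t+(R_t^2-\E{R_t^2}),
\]
\[
 v'_0v_0-\E{v'_0v_0\mid\mathcal{F}^0_{-t}}=A_tS_t+B_tR_t+(R_tS_t-\E{R_tS_t}).
\]

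The next task is to show that each of these residuals has $L^2$ norm of order $O(t\psi^t)$. For the ``mixed'' products $A_tR_t$, $A_tS_t$, $B_tR_t$ the two factors are independent, so the $L^2$ norm factorises; orthogonality of the summands in $A_t,B_t$ gives the uniform bounds $\|A_t\|_2\leq\|v_0\|_2$ and $\|B_t\|_2\leq\|v'_0\|_2$, while the decay estimates $\|g_k\|=O(\psi^k)$ and $\|h_k\|=O(k\psi^k)$ immediately yield $\|R_t\|_2=O(\psi^t)$ and $\|S_t\|_2=O(t\psi^t)$ by a direct second-moment computation on the tails.

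The only truly delicate piece is the tail-by-tail product $R_tS_t$, whose factors share the same innovations and are not independent. Here Cauchy--Schwarz leaves me with the need to control $\|R_t\|_4$ and $\|S_t\|_4$; this is the step that crucially uses the assumed finite fourth moment of $Y$. Invoking a Rosenthal-type inequality for sums of independent centred variables, combined with the same decay estimates on $g_k,h_k$, yields $\|R_t\|_4=O(\psi^t)$ and $\|S_t\|_4=O(t\psi^t)$, hence $\|R_tS_t\|_2=O(t\psi^{2t})$ and, by an analogous argument, $\|R_t^2-\E{R_t^2}\|_2=O(\psi^{2t})$. Collecting all the bounds gives $\|\xi_0-\E{\xi_0\mid\mathcal{F}^0_{-t}}\|_2=O(t\psi^t)$ in both cases, and since $|\psi^{(w)}|<1$ by Lemma~\ref{lemma:invini}, the summability condition \eqref{PelUcond} follows immediately from convergence of the geometric-type series $\sum_{t\geq1}t^{1/2}\psi^t$. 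I expect the Rosenthal step to be the main obstacle, since everything else reduces to orthogonality and independence, whereas upgrading from second to fourth moments of an infinite sum is the one place where a nontrivial inequality is unavoidable.
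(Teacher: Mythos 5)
Your proposal is correct and follows essentially the same route as the paper: the same split of $v_0$ and $v'_0$ at index $t$ into an $\mathcal{F}^0_{-t}$-measurable head and an independent tail (the paper's $p_t,q_t,r_t,s_t$ are your $A_t,R_t,B_t,S_t$), the same independence/orthogonality bounds $\norm{q_t}_2=O(\psi^t)$, $\norm{s_t}_2=O(t\psi^t)$ for the mixed products, and the same final estimate $O(t\psi^t)$ feeding into the convergence of $\sum t^{1/2}\psi^t$. The one place where you genuinely diverge is the tail-by-tail term: the paper bounds $\norm{q_ts_t-\E{q_ts_t\mid\mathcal{F}^0_{-t}}}_2$ by $2\norm{q_t}_2\norm{s_t}_2$, which implicitly treats $q_t$ and $s_t$ as if they were independent of each other, whereas they are built from the same innovations $Y_s$, $s<-t$; only the bound $\abs{\E{q_ts_t}}\leq\norm{q_t}_2\norm{s_t}_2$ is immediate, while $\norm{q_ts_t}_2$ requires passing through $L^4$ norms exactly as you do via Cauchy--Schwarz and a Rosenthal-type inequality. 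Your version therefore closes a small gap in the paper's argument and, as a byproduct, identifies the true role of the finite-fourth-moment hypothesis (needed for $\norm{R_t}_4$ and $\norm{S_t}_4$, not merely for $\E{\xi_0^2}<\infty$ as the paper states). The price is an extra moment inequality; the payoff is a fully rigorous bound on the dependent tail product.
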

\begin{proof}
We may write
\begin{equation} \label{pqdec}
  v_0 = \underbrace{\sum_{k=0}^{t} g_k Y_{t-k}}_{:=p_t}+ \underbrace{\sum_{k>t} g_k Y_{t-k}}_{:=q_t},
\end{equation}
where $p_t$ is a function in the $\sigma$-field $\mathcal{F}^0_{-t}$ and $q_t$ is independent from it, and similarly
\begin{equation} \label{rsdec}
 v'_0 = \underbrace{\sum_{k=0}^{t} h_k Y_{t-k}}_{:=r_t}+ \underbrace{\sum_{k>t} h_k Y_{t-k}}_{:=s_t}.
\end{equation}
One has
\begin{multline*}
 \E{v'_0 v_0 \mid \mathcal{F}^0_{-t}} = \E{(p_t+q_t)(r_t+s_t)\mid \mathcal{F}^0_{-t}} \\= p_tr_t+\underbrace{\E{q_t \mid \mathcal{F}^0_{-t}}}_{=0} r_t + p_t \underbrace{\E{s_t \mid \mathcal{F}^0_{-t}}}_{=0} + \E{q_ts_t \mid \mathcal{F}^0_{-t}} = p_tr_t+ \E{q_ts_t \mid \mathcal{F}^0_{-t}},
\end{multline*}
thus
\begin{multline*}
 \norm*{v'_0 v_0 - \E{v'_0 v_0 \mid \mathcal{F}^0_{-t}}}_2 =
 \norm{q_tr_t+p_ts_t+q_ts_t + \E{q_ts_t \mid \mathcal{F}^0_{-t}}}_2 \\
 \leq \norm{q_t}_2 \norm{r_t}_2+ \norm{p_t}_2 \norm{s_t}_2+2\norm{q_t}_2 \norm{s_t}_2.
\end{multline*}
Since the decompositions \eqref{gdec}, \eqref{hdec}, \eqref{pqdec}, \eqref{rsdec} are into independent (orthogonal) components, one can estimate
\begin{align*}
 \norm{p_t}_2 &\leq \norm{v_0}_2, & \norm{q_t}_2 &=O\left( \sum_{k>t} \norm{g_k}\right) = O(\psi^t),\\
 \norm{r_t}_2 &\leq \norm*{\left(\frac{\partial}{\partial \psi} v_0\right)}, & \norm{s_t}_2 &= O\left( \sum_{k>t} \norm{h_k} \right) = O(t \psi^t).
\end{align*}
When estimating the two sums, we used the fact that $\sum_{t\geq 0} \psi^t = (1-\psi)^{-1}<\infty$ and $\sum_{t\geq 0} t \psi^t = \psi(1-\psi)^{-2}<\infty$. Putting everything together, we have proved that
\[
 \norm*{v'_0 v_0 - \E{v'_0 v_0 \mid \mathcal{F}^0_{-t}}}_2 = O(t \psi^t) \quad \text{for $t\to\infty$}.
\]
Hence the sum in \eqref{PelUcond} converges (indeed, even without the $\frac{1}{\sqrt{t}}$ term), and the condition is verified.

A similar reasoning works for $v_t^2-\sigma$: we have
\[
\norm*{v_t^2-\sigma - \E{v_t^2-\sigma \mid \mathcal{F}^0_{-t}}}_2 = \norm{2p_tq_t+ \E{q_t^2 \mid \mathcal{F}^0_{-t}}} = O(\psi^t).
\]
The fourth moment finiteness is needed in order to have $\E{\xi_0^2}<\infty$.
\end{proof}

We have now all the tools to prove the asymptotic normality of the aggregated system parameters.
\begin{Theorem} \label{normality1}
Consider the process \eqref{eq:VMA_SS}, with i.i.d variables $Y_t=\m{\epsilon_t\\\eta_t}$ with finite fourth moment. Suppose that the process is stationary and ergodic, and that $\Sigma_\epsilon>0$, $\Sigma_\eta>0$. Then, the maximum likelihood estimators $\hat{\psi}^{(w)}$, $\hat{\sigma}^{(w)}$, for all $w\in\mathcal{W}$, are jointly asymptotically normal.
\end{Theorem}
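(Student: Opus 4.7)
The plan is to use the standard mean-value-theorem argument for M-estimators, applied to each aggregate and then combined across $w \in \mathcal{W}$ via the Cramér--Wold device. Fix $w$ and write $\hat{\theta}^{(w)} := (\hat{\psi}^{(w)}, \hat{\sigma}^{(w)})$, $\theta^{(w)} := (\psi^{(w)}, \sigma^{(w)})$. From $\sum_t \nabla \ell_t^{(w)}(\hat{\theta}^{(w)}) = 0$ and a Taylor expansion,
\[
 \sqrt{T}\bigl(\hat{\theta}^{(w)} - \theta^{(w)}\bigr) = -\left(\frac{1}{T}\sum_{t=1}^T \nabla^2 \ell_t^{(w)}(\bar{\theta}^{(w)})\right)^{-1} \frac{1}{\sqrt{T}}\sum_{t=1}^T \nabla \ell_t^{(w)}(\theta^{(w)})
\]
for some intermediate point $\bar{\theta}^{(w)}$. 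By Lemma~\ref{consistency1}, $\bar{\theta}^{(w)} \to \theta^{(w)}$ in probability, and by the ergodic computation \eqref{hessian} the averaged Hessian converges to the nonsingular matrix $\operatorname{diag}\bigl(1/(1-(\psi^{(w)})^2),\, 1/(2(\sigma^{(w)})^2)\bigr)$. Hence the asymptotic distribution reduces to that of the joint score vector.

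Direct differentiation of \eqref{ll} gives score entries $v'_t v_t/\sigma$ and $(v_t^2 - \sigma)/(2\sigma^2)$, both of which are zero-mean at the true parameters because $v'_t$ is a linear function of $v_1,\dots,v_{t-1}$ (uncorrelated with $v_t$) and $\E{v_t^2} = \sigma$. Lemma~\ref{pellemma2} already verifies the weak-dependence condition \eqref{PelUcond} for each of these scalar sequences individually, so Theorem~\ref{th:PelU06} yields scalar asymptotic normality for every component of every aggregate score.

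For jointness, I would apply Cramér--Wold: pick any finite linear combination
\[
 \xi_t := \sum_{w \in \mathcal{W}} \bigl(a_w\, v'^{(w)}_t v^{(w)}_t / \sigma^{(w)} + b_w\, (v^{(w)2}_t - \sigma^{(w)})/(2(\sigma^{(w)})^2)\bigr)
\]
and verify that it still satisfies \eqref{PelUcond}. The crucial observation is that every $v^{(w)}_t$ and $v'^{(w)}_t$ admits, by Lemma~\ref{pellemma1}, a decomposition $\sum_{k\geq 0} g_k^{(w)} Y_{t-k}$ (respectively with $h_k^{(w)}$) in the \emph{common} innovation sequence $Y_t = (\epsilon_t^T, \eta_t^T)^T$. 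Hence the splitting \eqref{pqdec}--\eqref{rsdec} can be carried out with respect to the \emph{same} filtration $\mathcal{F}^0_{-t}$ for all $w$ simultaneously. Setting $\psi^* := \max_{w \in \mathcal{W}}\abs{\psi^{(w)}} < 1$, the bounds in the proof of Lemma~\ref{pellemma2} extend by the triangle inequality to $\norm{\xi_0 - \E{\xi_0 \mid \mathcal{F}^0_{-t}}}_2 = O(t(\psi^*)^t)$, and the sum in \eqref{PelUcond} converges. Theorem~\ref{th:PelU06} then gives a CLT for $\frac{1}{\sqrt{T}}\sum_t \xi_t$; Cramér--Wold promotes this to joint normality of the stacked score vector, and the sandwich formula with the (block-diagonal) limiting Hessian yields joint asymptotic normality of the $\hat{\theta}^{(w)}$.

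The main delicate step is the extension of \eqref{PelUcond} to the linear combination~$\xi_t$. It is conceptually a routine extension of Lemma~\ref{pellemma2}, but care is required in the bookkeeping: different aggregates share the innovation sequence $Y_t$ but have different MA coefficients $\psi^{(w)}$ and different decomposition vectors $g_k^{(w)}, h_k^{(w)}$, so one must argue that a single filtration works for all $w$ and that the geometric decay rates combine under a common $\psi^*<1$. Once this is in place, the rest of the argument is standard.
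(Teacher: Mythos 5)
Your proposal is correct and follows essentially the same route as the paper's own proof: a Taylor/sandwich expansion of the stacked score equations, convergence of the block-diagonal Hessian via \eqref{hessian} and consistency, and then the Cram\'er--Wold device combined with Theorem~\ref{th:PelU06}, bounding $\norm{C_0-\E{C_0 \mid \mathcal{F}^0_{-t}}}_2$ for a generic linear combination by the triangle inequality and a common rate $O(t\psi_{\max}^t)$ with $\psi_{\max}:=\max_{w\in\mathcal{W}}\abs{\psi^{(w)}}<1$. The ``delicate step'' you flag --- that all aggregates share the single innovation sequence $Y_t$ and hence a single filtration --- is exactly the observation the paper relies on via Lemma~\ref{pellemma1}, so nothing is missing.
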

\begin{proof}
The first-order optimality conditions for the ML estimates state that $0=\frac1T \sum \nabla \ell_t(\hat{\psi},\hat{\sigma})$, where $\nabla$ denotes taking a gradient with respect to the pair of parameters $(\tilde{\psi},\tilde{\sigma})$. Using a multivariate Taylor expansion around $(\psi,\sigma)$, we get
\begin{equation} \label{taylor1}
 0=\frac1T \sum \nabla \ell_t(\psi,\sigma) + \left(\frac1T \sum \nabla^2 \ell_t(\tilde{\psi},\tilde{\sigma})\right)\left(\m{\hat{\psi}\\\hat{\sigma}}-\m{\psi\\\sigma}\right),
\end{equation}
for a suitable pair $(\tilde{\psi},\tilde{\sigma})$ lying in the segment that joins $(\hat{\psi},\hat{\sigma})$ and $(\psi,\sigma)$. If $(\hat{\psi},\hat{\sigma})$ are close enough to the exact values, then by continuity the Hessian matrix is invertible and bounded, thus we can rewrite \eqref{taylor1} as
\begin{equation} \label{taylor2}
 \m{\hat{\psi}\\\hat{\sigma}}-\m{\psi\\\sigma} = -\left(\frac1T \sum \nabla^2 \ell_t(\tilde{\psi},\tilde{\sigma})\right)^{-1} \left( \frac1T \sum \nabla \ell_t(\psi,\sigma)\right).
\end{equation}

This expansion \eqref{taylor2} is valid for every $w\in\mathcal{W}$ that we use as the aggregation weights.
Let $\beta$ be the vector obtained by stacking the vectors $\m{\psi^{(w_i)}\\ \sigma^{(w_i)}}$ for each $w_i\in\mathcal{W}$, one above the other, and $\hat{\beta}$ be similarly defined with their ML estimators. Stacking the Taylor expansions \eqref{taylor2} one above the other and multiplying by $\sqrt{T}$ we get
\begin{equation} \label{needsl}
 \sqrt{T}(\hat{\beta}-\beta) = -M(\tilde{\beta})^{-1}\frac{1}{\sqrt{T}}\sum_{t=1}^T
 \m{\frac{\partial}{\partial \psi^{(w_1)}} \ell_t^{(w_1)}\\
    \frac{\partial}{\partial \sigma^{(w_1)}} \ell_t^{(w_1)}\\
    \frac{\partial}{\partial \psi^{(w_2)}} \ell_t^{(w_2)}\\
    \frac{\partial}{\partial \sigma^{(w_2)}} \ell_t^{(w_2)}\\
    \vdots\\
    \frac{\partial}{\partial \sigma^{(w_{N(N+1)/2})}} \ell_t^{(w_{N(N+1)/2})}
   },
\end{equation}
with $M(\tilde{\beta})$ the block diagonal matrix containing $\frac1T \sum \nabla^2 \ell_t^{(w_i)}(\tilde{\psi^{(w_i)}},\tilde{\sigma^{(w_i)}})$ in its diagonal blocks. We know from the proof of Lemma~\ref{consistency1} and the consistency that $M(\tilde{\beta})$ converges almost surely to a diagonal matrix; if we prove that a central limit result holds for the vector sum in \eqref{needsl}, then the thesis follows by Slutsky's theorem. Theorem~\ref{th:PelU06} and Lemma~\ref{pellemma2} together give only a weaker result, i.e., that each component of the vector sum is asymptotically normal when considered alone. To prove \emph{joint} normality, we need a modification of the above proof. We shall prove that each linear combination of its entries is asymptotically normal, and use the Cramer-Wold device \cite[Proposition~6.3.1]{BD}. Let us take a generic linear combination
\[
 C_t:=\sum_{i=1}^{N(N+1)/2} a_i \frac{\partial}{\partial \psi^{(w_i)}} \ell_t^{(w_i)} + \sum_{i=1}^{N(N+1)/2} b_i \frac{\partial}{\partial \sigma^{(w_i)}} \ell_t^{(w_i)}.
\]
By the linearity of expectation and subadditivity of norms,
\begin{multline*}
 \norm*{C_0-\E{C_0 \mid \mathcal{F}^0_{-t}}}_2 \leq \sum_{i=1}^{N(N+1)/2} \abs{a_i} \norm*{\frac{\partial}{\partial \psi^{(w_i)}} \ell_t^{(w_i)} - \E{\frac{\partial}{\partial \psi^{(w_i)}} \ell_t^{(w_i)}  \mid \mathcal{F}^0_{-t}}}_2\\
 + \sum_{i=1}^{N(N+1)/2} \abs{b_i} \norm*{\frac{\partial}{\partial \sigma^{(w_i)}} \ell_t^{(w_i)} - \E{\frac{\partial}{\partial \sigma^{(w_i)}} \ell_t^{(w_i)}  \mid \mathcal{F}^0_{-t}}}_2,
\end{multline*}
and we know from the proof of Lemma~\ref{pellemma2} that each term in the right-hand side is $O(t(\psi^{(w_i)})^t)$. Setting $\psi_{\max}:=\max_{w\in \mathcal{W}} \abs{\psi^{(w)}}$, we have therefore
\[
 \norm*{C_0-\E{C_0 \mid \mathcal{F}^0_{-t}}}_2 = O(t \psi_{\max}^t),
\]
hence our generic linear combination $C_t$ satisfies the bound of Theorem~\ref{PelUresult} and thus is asymptotically normal. So, putting all together, in \eqref{needsl} $M(\tilde{\beta})^{-1}$ converges a.s.~to a constant matrix and the scaled sum converges in probability to a normal vector, thus by Slutsky's theorem $\sqrt{T}(\hat{\beta}-\beta)$ is asymptotically normal.
\end{proof}

Hence we have the following asymptotic result for $\hat{\Theta}$ and $\hat{\Sigma}_u$.
\begin{Theorem}
Consider the EWMA process \eqref{eq:VMA_SS}; suppose that the noises $\eta_t$ and $\epsilon_t$ are i.i.d. processes with variances $0<\Sigma_\eta,\Sigma_\epsilon<\infty$, and that the resulting EWMA process is stationary and ergodic. Then, the META estimator $\hat{\Theta}, \hat{\Sigma}_u$ described in Section~\ref{sec:META} is asymptotically consistent. If, in addition, the fourth moments of $\epsilon_t$ and $\eta_t$ are finite, then the estimator is asymptotically normal.
\end{Theorem}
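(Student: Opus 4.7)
The plan is to view the META estimator as the composition of several smooth maps applied to the stacked vector $\hat\beta$ of aggregated maximum-likelihood estimates whose asymptotic behaviour is already controlled by Lemma~\ref{consistency1} and Theorem~\ref{normality1}. Concretely, I would define $F$ by chaining together three ingredients: the polynomial map $(\psi^{(w)},\sigma^{(w)})\mapsto(\gamma_0^{(w)},\gamma_1^{(w)})$ of \eqref{autocovFromMAini} applied to each aggregate; the linear reshuffling \eqref{GammaFromGammini} that assembles $\Gamma_0$ and $\Gamma_1$; and the matrix formula \eqref{redasautocov} that produces $\Theta$ and $\Sigma_u$. By construction $(\hat\Theta,\hat\Sigma_u)=F(\hat\beta)$ and $(\Theta,\Sigma_u)=F(\beta)$, so the two claims reduce to continuity, respectively differentiability, of $F$ at the true parameter.

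For consistency, Lemma~\ref{consistency1} supplies $\hat\beta\to\beta$ in probability, so by the continuous mapping theorem it is enough to check that $F$ is continuous at $\beta$. The polynomial and linear pieces are trivially continuous; the only subtlety is \eqref{redasautocov}, which uses a matrix inverse and a principal matrix square root. Under our hypotheses $\Gamma_1=-\Sigma_\epsilon$ is invertible because $\Sigma_\epsilon>0$, and the argument of the square root, $\Gamma_0\Gamma_1^{-1}\Gamma_0\Gamma_1^{-1}-4I$, is similar to $\operatorname{diag}(d_i^2+4d_i)$ with all entries strictly positive by the computation in the proof of Proposition~\ref{prop:param}. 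Both matrix inversion and the principal square root are holomorphic on a neighbourhood of such matrices, so $F$ is continuous, and in fact $C^\infty$, at $\beta$.

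For asymptotic normality, Theorem~\ref{normality1} gives $\sqrt T(\hat\beta-\beta)\Rightarrow N(0,V)$ for some covariance matrix $V$; the finite-fourth-moment assumption on $\epsilon_t,\eta_t$ is used here precisely to justify Lemma~\ref{pellemma2}. The $C^\infty$ smoothness of $F$ at $\beta$ just observed then lets me apply the multivariate delta method: writing $J=DF(\beta)$ for the Jacobian, we obtain $\sqrt T\bigl(F(\hat\beta)-F(\beta)\bigr)\Rightarrow N(0,JVJ^T)$, which (after identifying the entries of $\Theta$ and $\Sigma_u$ with a real parameter vector) is exactly the asymptotic normality claimed for $\hat\Theta,\hat\Sigma_u$.

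The main obstacle is checking the analytic regularity of the principal matrix square root at the relevant argument, which is essential for both steps. Once the spectral analysis of Proposition~\ref{prop:param} is invoked to confine the spectrum of $Q^2+4Q$ (equivalently of $\Gamma_0\Gamma_1^{-1}\Gamma_0\Gamma_1^{-1}-4I$) to $(0,\infty)$, the principal square root is analytic there by the Dunford--Riesz functional calculus, and the remainder of the argument is a routine application of the continuous mapping theorem and the delta method to the already-established asymptotics of the aggregated estimators.
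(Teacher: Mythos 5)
Your proposal is correct and follows essentially the same route as the paper's proof: express $(\hat\Theta,\hat\Sigma_u)$ as a smooth function of the stacked aggregate estimates $\hat\beta$, then invoke the continuous mapping theorem for consistency (via Lemma~\ref{consistency1}) and the delta method for normality (via Theorem~\ref{normality1}). The only difference is that you verify explicitly the regularity of the matrix inverse and principal square root in \eqref{redasautocov} near the true parameter --- a point the paper simply asserts --- which is a welcome addition rather than a deviation.
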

\begin{proof}
The estimated values $\hat{\Theta}$ and $\hat{\Sigma}_u$ are an a function of $\psi^{(w)}$ and $\sigma^{(w)}$ for $w\in\mathcal{W}$ as introduced above; the specific function, obtained by composing \eqref{GammaFromGammini} and \eqref{eq:risultatoMoM}, is continuous and differentiable. Since it is continuous and these values are consistent by Lemma~\ref{consistency1}, the estimator is consistent. Under the additional hypothesis on the fourth moment, these values are also asymptotically normal by Theorem~\ref{normality1}, thus by the delta method \cite[Appendix~C.5]{Lut} asymptotic normality holds.
\end{proof}

\section{META vs. Maximum Likelihood: some numerical experiments} \label{sec:numerical}
In this section we provide some numerical experiments to compare the estimates obtained with the META method vis-a-vis a maximum likelihood estimator on the VMA(1) representation \eqref{eq:VMA_SS}.
We generated simulated data for a model of the form \eqref{eq:MRW}, using Gaussian noise with four different sets of covariance matrices, two bivariate and two trivariate ones, resulting in signal-to-noise ratios of different magnitude:
\begin{description}
 \item[Model 1] $\Sigma_\eta=\m{
1 &  -0.5 \\
-0.5 & 1.5}$, $
\Sigma_\epsilon=\m{
1.5 &  -0.15 \\
-0.15 & 1\\
} $,
\item[Model 2] $ \Sigma_\eta=\m{
1 &  -0.5 \\
-0.5 & 1.5}$, $
\Sigma_\epsilon=\m{
30 &  -3 \\
-3 & 20\\
}$,
\item[Model 3] $\Sigma_\eta=\m{
1 &  -0.5 & 0.3\\
-0.5 & 1.5 & -0.2\\
0.3 & -0.2 &1}$, $
\Sigma_\epsilon=\m{
1.5 &  -0.15 &-0.1\\
-0.15 & 1 & 0.3\\
-0.1 & 0.3 & 1.5}
$,
\item[Model 4] $\Sigma_\eta=\m{
1 &  -0.5 & 0.3\\
-0.5 & 1.5 & -0.2\\
0.3 & -0.2 &1}$, $
\Sigma_\epsilon=\m{
30 &  -3 &-2\\
-3 & 20 & 6\\
-2 & 6 & 30}
$.
\end{description}
%[TODO: rimosso $\Theta / \Sigma_u$; non credo che il referee ci chiedesse davvero quelli. In ogni caso se li rivuoi sono in fondo al file dopo l'enddocument]
The aim here is generating two different types of models; Model 1 and Model 3 have roots of the matrix $\Theta$ being about half of those of Model 2 and Model 4 respectively. Moreover, in Model 1 and Model 3 $\Sigma_u$ is much smaller than that of Model 2 and Model 4 respectively.

For each model, we generated time series of three different sample sizes $T=200$, $400$ and $1000$, and estimated them using both the ML and META methods. Each experiment has been repeated 500 times, with different data, produced each time using new computer-generated random numbers. All simulations were carried out using \textsc{Mathematica 8} by Wolfram and its \textsc{TimeSeries 1.4.1} package\footnote{Further information can be found in the Wolfram website: see http://media.wolfram.com/documents/TimeSeriesDocumentation.pdf }. The source files for the simulation are available upon request.

Since the parameter matrices for these simulated models are explicitly available (see Proposition~\ref{prop:param}), we can check how close the estimated $\Theta$ and $\Sigma_u$ are to the real ones. As error measure, we used the relative error in the Frobenius norm (root mean squared error of the matrix entries)
\begin{equation} \label{rmse}
RMSE=\frac{\left\|\hat{X}-X\right\|_F}{\left\|X\right\|_F},
\end{equation}
where $X$ is the matrix of true parameters and $\hat{X}$ is the estimated one, and $\left\|\cdot\right\|_F$ is the Frobenius norm: for a $m\times n$ matrix $X$, $\norm{X}_F:=\left(\sum_{i=1}^m \sum_{j=1}^n X_{ij}^2\right)^{1/2}$. The average errors on the set of 500 repeated experiments under this error metric are reported in Table~\ref{results}. The central columns contain the RMSE \eqref{rmse} multiplied by 1000. The last two columns report the average time (in seconds) taken by each estimation procedure.

Notice that the error is considerably lower in Models~2 and~4, which have a lower signal-to-noise ratio $Q$ and thus $\Theta$ with eigenvalues closer to the unit circle. This behaviour is expected in view of the properties of the maximum likelihood estimator for ARMA processes (cfr. \cite[Section~7.2.3]{Lut}, and \cite[\textsection 8.5]{BD} for a discussion of asymptotic efficiency in the scalar case). These empirical results show that META, whose derivation contains elements from both moment estimators and ML estimators, does not degrade in quality like the former when the eigenvalues are closer to the unit circle, but seems to maintain the higher asymptotic efficiency of the latter.

Overall, the results are clearly in favor of the META estimator. Indeed, not only the META estimator is extremely faster than the multivariate ML estimator, but it seems to outperform the rival estimator in terms of accuracy nearly all the times. There are only two cases where ML slightly outperforms META with respect to $\Sigma_u$ (i.e., Model 1 and Model 3 with $T=400$). On the contrary, regarding $\Theta$, META is always more accurate than the ML approach. The last column of Table 1 shows the computational difficulties of the standard ML approach. On a normal computer, it took us about one week of computation time to obtain the ML results for Models 3 and 4 with $T=1000$. In contrast, the same experiments with the META estimator took one hour and a half.
\begin{table}
\caption{Mean relative (normalized) RMSE of the estimates of the reduced parameters} \label{results}
%\begin{scriptsize}
{\centering
\begin{tabular}{|c|c|cc|cc|cc|} \hline
        &Sample size $T$&$\Theta$ META&$\Theta$ ML&$\Sigma_u$ META&$\Sigma_u$ ML &Time META&Time ML\\ \hline
		    &200  &202.52&236.77&108.28&109.11&1.55&59.99\\ 
Model 1	&400  &121.41&138.13&83.31&82.93&3.13&107.28\\
				&1000  &80.83&101.53&48.65&48.96&8.34&226.64\\ \hline
        %&(d=2)&$\hat{\Theta}$&$\tilde{\Theta}$&$\hat{\Sigma}$&$\tilde{\Sigma}$&$\hat{Time}$&$\tilde{Time}$\\ \hline
&200  &69.51&78.26&97.50&98.31&1.36&46.53\\
Model 2					&400  &48.26&56.48&80.91&81.52&3.74&113.36\\ 
				&1000  &28.01&34.07&47.60&48.02&7.63&210.06\\ \hline
 		  %  &(d=3)&$\hat{\Theta}$&$\tilde{\Theta}$&$\hat{\Sigma}$&$\tilde{\Sigma}$&$\hat{Time}$&$\tilde{Time}$\\ \hline
		    &200  &205.07&254.49&135.26&136.41&2.65&201.41\\
Model 3	&400  &162.95&187.40&93.48&93.21&5.77&316.74\\
				&1000 &93.85&108.92&60.08&61.13&12.62&523.67\\ \hline
      %  &(d=3)&$\hat{\Theta}$&$\tilde{\Theta}$&$\hat{\Sigma}$&$\tilde{\Sigma}$&$\hat{Time}$&$\tilde{Time}$\\ \hline
		    &200  &86.66&107.22&123.86&124.19&2.81&202.95\\
Model 4	&400  &57.03 &67.25 &95.13 &96.75 &5.66&295.93\\
				&1000 &29.91&37.04&61.78&62.13&13.12&541.36\\ \hline
\end{tabular}}
\\
\begin{scriptsize}
The cells relative to $\Theta$ and $\Sigma_u$ reports the average relative root mean squared error \eqref{rmse} of the estimated parameters, multiplied by 1000. The last two columns report the average number of seconds required for a single run of the estimation procedure. Lower is better.
\end{scriptsize}
\end{table}
These results clearly suggest the adoption of the suggested estimator, being not only extremely faster to compute, but also having very good small sample properties.

As a second performance test, for Models~1 and~2 and $T=200$, we compared the forecasting accuracy of the estimated parameters. In detail, we generated $T=200$ observations of each model, estimated the parameters using the first $199$ observations only, and used them to generate a prediction $y^F_{200}$ of the last value $y_{200}$ of the time series according to~\eqref{forecaster}. Each of these experiments has been repeated $R=200$ times, each time with a different randomly-generated time series of length $200$. We report in Figures~2 and~3 the boxplot of the forecast error $y_{200}-y^F_{200}$ for Model~1 and~2 respectively. Each figure contains box plots for the two components  $(y^F_{200}-y_{200})_1$ and $(y^F_{200}-y_{200})_2$; for example, \emph{META1}  and \emph{META2} refer respectively to the forecast error of the first and second component of the multivariate time series when META is employed. As a term of comparison, we report the prediction error obtained with the \emph{true} system matrix $\Theta$, which is available explicitly in our simulation. 

\begin{center}
\includegraphics{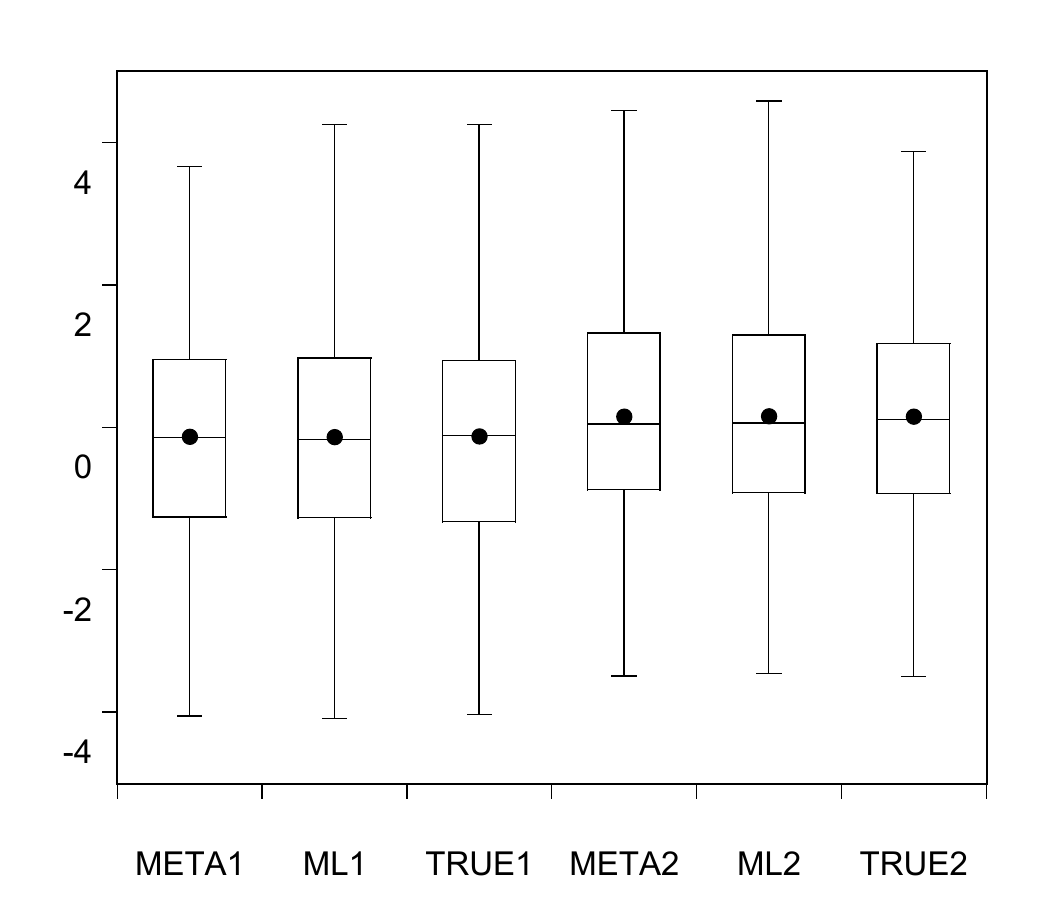}
\end{center}

\begin{center}
\includegraphics{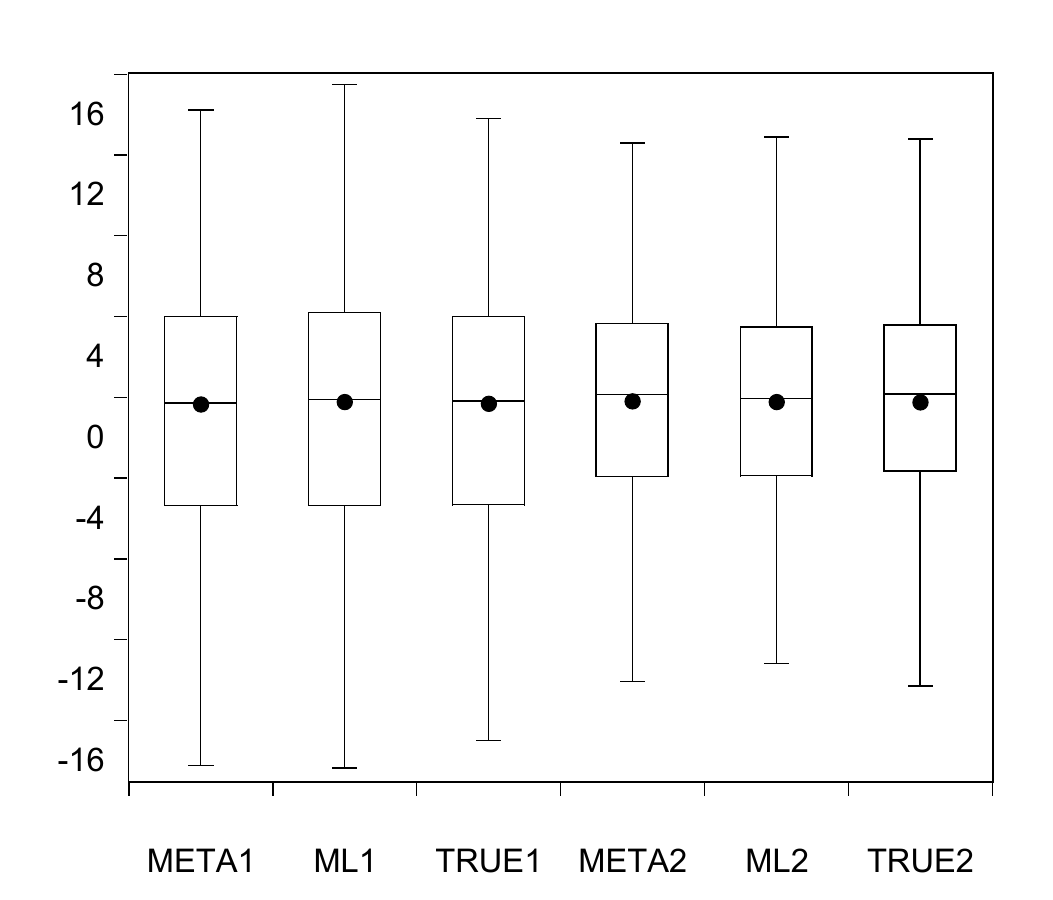}
\end{center}

As expected, the forecast errors relative to Model~2 are much more dispersed than those for Model~1. Moreover, the prediction accuracy is almost identical in all three cases for each equation relative to both Model~1 and~2, showing that the META forecasts are essentially as good as the ML ones (and almost as good as the real system matrices).

\section{Conclusions}
Simple exponential smoothing has been shown to be a valid candidate in forecasting demand (see \cite{DekDO04} and \cite{MooHS12}). This paper provides exact results linking reduced form parameters and autocovariances for the simple exponential smoothing in the multivariate framework. The results are used to provide a fast and efficient estimator, which seems to outperform the multivariate maximum likelihood on the underlying VMA representation in both time and accuracy. The technique used in the estimator allows one to reduce the problem from one $N$-dimensional maximum likelihood estimation to $N(N+1)/2$ scalar ML problems. This is especially convenient, since ML estimators for high-dimensional problems are slower to converge and more prone to numerical failures. The ML estimator has an expected complexity of $O(N^3T)$ per step, hence, under the reasonable assumption that the number of steps stays the same or decreases for the aggregate problems, passing to univariate problems rates to be even more effective when $N$ is larger. An additional benefit is that the scalar estimation problems are separate and can be solved in parallel.

A key feature of the VMA models resulting from our exponential smoothing model is that the autocovariance $\Gamma_1$ is a symmetric matrix. This property is used in both Proposition~\ref{prop:param} and Lemma~\ref{gammatogamma}. Our estimation procedure requires only this hypothesis, so it works without changes for any MA(1) model with $\Gamma_1=\Gamma_1^T$.
%
%It should be noted instead that the ML estimator works on the unrestricted MA representation instead and does not make use of this additional property, which is not easy to enforce in the context of a ML estimator since it is not expressed readily in terms of the parameters. This could be another reason to explain the better performance of the new estimator.
%
We are currently working on removing the assumption $\Gamma_1=\Gamma_1^T$. This more general model arises, for instance, when the noises $\epsilon_t$ and $\eta_t$ characterizing the system \eqref{eq:MRW} are correlated; we leave this for future research. Another open problem is deriving the exact asymptotic covariance of the estimator, with the aim of comparing it to maximum likelihood in terms of asymptotic efficiency (cfr. \cite[\textsection 8.5]{BD}). This task looks challenging, even in the case of Gaussian noise, since the noises $v^{(w)}$ of the aggregate processes are correlated and the computation would have to keep track of all these correlations.

A limitation of this work is that the suggested estimator is valid for the single exponential smoothing, but not for the whole family of exponential smoothing models. For example, our estimator cannot be used for models that take into account for the presence of a stochastic trend, such as the local linear trend model or the cubic smoothing spline models (see \cite{HARV2} and \cite{HynKPB05}). This is because the lack of closed-form results for more complex models.  

This manuscript has practical implications for practitioners involved in forecasting a multivariate production planning framework. Consider, for example, the case of a retail company providing a broad range of products to its customers. In order to reduce costs and to manage efficiently the production planning process, the company has to rely on accurate forecasts for the demand of each good/service as well as for the aggregate demand. Aggregated models, such as the top-down and bottom-up approaches, are often used because it is difficult and computationally intensive to handle a multivariate approach with full dependence between the variables. Instead, thanks to the algorithm suggested in this paper, estimation and forecasting can now be implemented with a multivariate exponential smoothing model without facing heavy computational issues.
%As a consequence, our results make those of \cite{SS} applicable in empirical context since top-down and bottom-up can be automatically compared once the parameters are derived.

\bibliographystyle{abbrvnat}
\bibliography{MEWMA_PolSbra}

\begin{thebibliography}{25}
\providecommand{\natexlab}[1]{#1}
\providecommand{\url}[1]{\texttt{#1}}
\expandafter\ifx\csname urlstyle\endcsname\relax
  \providecommand{\doi}[1]{doi: #1}\else
  \providecommand{\doi}{doi: \begingroup \urlstyle{rm}\Url}\fi

\bibitem[Brockwell and Davis(2006)]{BD}
P.~J. Brockwell and R.~A. Davis.
\newblock \emph{Time series: theory and methods}.
\newblock Springer Series in Statistics. Springer, New York, 2006.
\newblock ISBN 978-1-4419-0319-8; 1-4419-0319-8.
\newblock Reprint of the second (1991) edition.

\bibitem[Chen and Blue(2010)]{CheB10}
A.~Chen and J.~Blue.
\newblock Performance analysis of demand planning approaches for aggregating,
  forecasting and disaggregating interrelated demands.
\newblock \emph{International Journal of Production Economics}, 128\penalty0
  (2):\penalty0 586--602, Dec. 2010.
\newblock ISSN 0925-5273.
\newblock \doi{10.1016/j.ijpe.2010.07.006}.
\newblock URL
  \url{http://www.sciencedirect.com/science/article/pii/S0925527310002318}.

\bibitem[De~Gooijer and Hyndman(2006)]{DegH06}
J.~G. De~Gooijer and R.~J. Hyndman.
\newblock 25 years of time series forecasting.
\newblock \emph{International Journal of Forecasting}, 22\penalty0
  (3):\penalty0 443 -- 473, 2006.
\newblock ISSN 0169-2070.
\newblock \doi{http://dx.doi.org/10.1016/j.ijforecast.2006.01.001}.
\newblock URL
  \url{http://www.sciencedirect.com/science/article/pii/S0169207006000021}.
\newblock Twenty five years of forecasting.

\bibitem[Dekker et~al.(2004)Dekker, van Donselaar, and Ouwehand]{DekDO04}
M.~Dekker, K.~van Donselaar, and P.~Ouwehand.
\newblock How to use aggregation and combined forecasting to improve seasonal
  demand forecasts.
\newblock \emph{International Journal of Production Economics}, 90\penalty0
  (2):\penalty0 151--167, July 2004.
\newblock ISSN 0925-5273.
\newblock \doi{10.1016/j.ijpe.2004.02.004}.
\newblock URL
  \url{http://www.sciencedirect.com/science/article/pii/S0925527304000398}.

\bibitem[Fliedner and Lawrence(1995)]{FliL95}
E.~B. Fliedner and B.~Lawrence.
\newblock Forecasting system parent group formation: An empirical application
  of cluster analysis.
\newblock \emph{Journal of Operations Management}, 12\penalty0 (2):\penalty0
  119--130, 1995.

\bibitem[Fliedner(1999)]{Fli99}
G.~Fliedner.
\newblock An investigation of aggregate variable time series forecast
  strategies with specific subaggregate time series statistical correlation.
\newblock \emph{Computers \& Operations Research}, 26\penalty0 (10):\penalty0
  1133--1149, 1999.

\bibitem[Gardner~Jr.(2006)]{Gar06}
E.~Gardner~Jr.
\newblock Exponential smoothing: The state of the art -- part ii.
\newblock \emph{International Journal of Forecasting}, 22\penalty0
  (4):\penalty0 637--666, 2006.
\newblock \doi{10.1016/j.ijforecast.2006.03.005}.

\bibitem[Harvey(1991)]{HARV2}
A.~C. Harvey.
\newblock \emph{Forecasting Structural Time Series Models and the {K}alman
  Filter}.
\newblock Cambridge University Press, 1991.

\bibitem[Holt(2004)]{Hol04}
C.~C. Holt.
\newblock Author's retrospective on "forecasting seasonals and trends by
  exponentially weighted moving averages".
\newblock \emph{International Journal of Forecasting}, 20\penalty0
  (1):\penalty0 11--13, Jan. 2004.
\newblock ISSN 0169-2070.
\newblock \doi{10.1016/j.ijforecast.2003.09.017}.
\newblock URL
  \url{http://www.sciencedirect.com/science/article/pii/S0169207003001158}.

\bibitem[Horn and Johnson(1990)]{HJ}
R.~A. Horn and C.~R. Johnson.
\newblock \emph{Matrix analysis}.
\newblock Cambridge University Press, Cambridge, 1990.
\newblock ISBN 0-521-38632-2.
\newblock Corrected reprint of the 1985 original.

\bibitem[Hyndman et~al.(2005)Hyndman, King, Pitrun, and Billah]{HynKPB05}
R.~J. Hyndman, M.~L. King, I.~Pitrun, and B.~Billah.
\newblock Local linear forecasts using cubic smoothing splines.
\newblock \emph{Australian \& New Zealand Journal of Statistics}, 47\penalty0
  (1):\penalty0 87--99, Mar. 2005.
\newblock ISSN 1467-{842X}.
\newblock \doi{10.1111/j.1467-842X.2005.00374.x}.
\newblock URL
  \url{http://onlinelibrary.wiley.com/doi/10.1111/j.1467-842X.2005.00374.x/abstract}.

\bibitem[Kascha(2012)]{Kas11}
C.~J. Kascha.
\newblock A comparison of estimation methods for vector autoregressive
  moving-average models.
\newblock \emph{Econometric Reviews}, 31\penalty0 (3):\penalty0 297--324, 2012.
\newblock ISSN 0747-4938.

\bibitem[Ling and McAleer(2010)]{LinM10}
S.~Ling and M.~McAleer.
\newblock A general asymptotic theory for time-series models.
\newblock \emph{Stat. Neerl.}, 64\penalty0 (1):\penalty0 97--111, 2010.
\newblock ISSN 0039-0402.
\newblock \doi{10.1111/j.1467-9574.2009.00447.x}.
\newblock URL \url{http://dx.doi.org/10.1111/j.1467-9574.2009.00447.x}.

\bibitem[L{\"u}tkepohl(1987)]{Lut_aggregate}
H.~L{\"u}tkepohl.
\newblock \emph{Forecasting aggregated vector {ARMA} processes}, volume 284 of
  \emph{Lecture Notes in Economics and Mathematical Systems}.
\newblock Springer-Verlag, Berlin, 1987.
\newblock ISBN 3-540-17208-4.
\newblock \doi{10.1007/978-3-642-61584-9}.
\newblock URL \url{http://dx.doi.org/10.1007/978-3-642-61584-9}.

\bibitem[L{\"u}tkepohl(2005)]{Lut}
H.~L{\"u}tkepohl.
\newblock \emph{New introduction to multiple time series analysis}.
\newblock Springer-Verlag, Berlin, 2005.
\newblock ISBN 3-540-40172-5.

\bibitem[Makridakis and Hibon(2000)]{MakSH00}
S.~Makridakis and M.~Hibon.
\newblock The {M3}-competition: results, conclusions and implications.
\newblock \emph{International Journal of Forecasting}, 16\penalty0
  (4):\penalty0 451--476, Oct. 2000.
\newblock ISSN 0169-2070.
\newblock \doi{10.1016/S0169-2070(00)00057-1}.
\newblock URL
  \url{http://www.sciencedirect.com/science/article/pii/S0169207000000571}.

\bibitem[Moon et~al.(2012)Moon, Hicks, and Simpson]{MooHS12}
S.~Moon, C.~Hicks, and A.~Simpson.
\newblock The development of a hierarchical forecasting method for predicting
  spare parts demand in the {South} {Korean} {Navy—A} case study.
\newblock \emph{International Journal of Production Economics}, 140\penalty0
  (2):\penalty0 794--802, Dec. 2012.
\newblock ISSN 0925-5273.
\newblock \doi{10.1016/j.ijpe.2012.02.012}.
\newblock URL
  \url{http://www.sciencedirect.com/science/article/pii/S0925527312000709}.

\bibitem[Moon et~al.(2013)Moon, Simpson, and Hicks]{MooHS13}
S.~Moon, A.~Simpson, and C.~Hicks.
\newblock The development of a classification model for predicting the
  performance of forecasting methods for naval spare parts demand.
\newblock \emph{International Journal of Production Economics}, 143\penalty0
  (2):\penalty0 449--454, June 2013.
\newblock ISSN 0925-5273.
\newblock \doi{10.1016/j.ijpe.2012.02.016}.
\newblock URL
  \url{http://www.sciencedirect.com/science/article/pii/S0925527312000746}.

\bibitem[Muth(1960)]{Muth}
J.~F. Muth.
\newblock {Optimal Properties of Exponentially Weighted Forecasts}.
\newblock \emph{Journal of the American Statistical Association}, 55\penalty0
  (290):\penalty0 299--306, June 1960.
\newblock ISSN 01621459.
\newblock \doi{10.2307/2281742}.
\newblock URL \url{http://dx.doi.org/10.2307/2281742}.

\bibitem[Peligrad and Utev(2006)]{PelU06}
M.~Peligrad and S.~Utev.
\newblock Central limit theorem for stationary linear processes.
\newblock \emph{Ann. Probab.}, 34\penalty0 (4):\penalty0 1608--1622, 2006.
\newblock ISSN 0091-1798.
\newblock \doi{10.1214/009117906000000179}.
\newblock URL \url{http://dx.doi.org/10.1214/009117906000000179}.

\bibitem[Sbrana and Silvestrini(2013)]{SS}
G.~Sbrana and A.~Silvestrini.
\newblock Forecasting aggregate demand: Analytical comparison of top-down and
  bottom-up approaches in a multivariate exponential smoothing framework.
\newblock \emph{International Journal of Production Economics}, 146\penalty0
  (1):\penalty0 185--198, Nov. 2013.
\newblock ISSN 0925-5273.
\newblock \doi{10.1016/j.ijpe.2013.06.022}.
\newblock URL
  \url{http://www.sciencedirect.com/science/article/pii/S0925527313002922}.

\bibitem[Weatherford et~al.(2001)Weatherford, Kimes, and Scott]{WeaKS01}
L.~R. Weatherford, S.~E. Kimes, and D.~A. Scott.
\newblock Forecasting for hotel revenue management: Testing aggregation against
  disaggregation.
\newblock \emph{The Cornell Hotel and Restaurant Administration Quarterly},
  42\penalty0 (4):\penalty0 53--64, Aug. 2001.
\newblock ISSN 0010-8804.
\newblock \doi{10.1016/S0010-8804(01)80045-8}.
\newblock URL
  \url{http://www.sciencedirect.com/science/article/pii/S0010880401800458}.

\bibitem[Widiarta et~al.(2009)Widiarta, Viswanathan, and Piplani]{WIDIARTAetal}
H.~Widiarta, S.~Viswanathan, and R.~Piplani.
\newblock Forecasting aggregate demand: An analytical evaluation of top-down
  versus bottom-up forecasting in a production planning framework.
\newblock \emph{International Journal of Production Economics}, 118\penalty0
  (1):\penalty0 87--94, March 2009.
\newblock URL \url{http://ideas.repec.org/a/eee/proeco/v118y2009i1p87-94.html}.

\bibitem[Zotteri and Kalchschmidt(2007)]{ZotK07}
G.~Zotteri and M.~Kalchschmidt.
\newblock Forecasting practices: Empirical evidence and a framework for
  research.
\newblock \emph{International Journal of Production Economics}, 108\penalty0
  (1–2):\penalty0 84--99, July 2007.
\newblock ISSN 0925-5273.
\newblock \doi{10.1016/j.ijpe.2006.12.004}.
\newblock URL
  \url{http://www.sciencedirect.com/science/article/pii/S0925527306003069}.

\bibitem[Zotteri et~al.(2005)Zotteri, Kalchschmidt, and Caniato]{ZotKC05}
G.~Zotteri, M.~Kalchschmidt, and F.~Caniato.
\newblock The impact of aggregation level on forecasting performance.
\newblock \emph{International Journal of Production Economics},
  93–94:\penalty0 479--491, Jan. 2005.
\newblock ISSN 0925-5273.
\newblock \doi{10.1016/j.ijpe.2004.06.044}.
\newblock URL
  \url{http://www.sciencedirect.com/science/article/pii/S092552730400266X}.

\end{thebibliography}

\end{document}